\pdfoutput=1
\documentclass[runningheads]{llncs}
\usepackage{amsmath,amssymb}
\usepackage{graphicx}
\usepackage{booktabs,multirow,array}
\usepackage[table]{xcolor}
\usepackage[pdfusetitle,colorlinks=true,linkcolor={black!15!blue},
            citecolor={black!35!green},urlcolor={black!25!blue}]{hyperref}
\usepackage{algorithm,algorithmic}
\usepackage{microtype}
\definecolor{fairisgreen}{HTML}{E8F5E9}
\newcommand{\fairis}{\textsc{Fairis}}
\newcommand{\fairisn}{\textsc{Fairis-N}}
\newcommand{\fairfed}{\textsc{FairFed}}
\newcommand{\fedavg}{\textsc{FedAvg}}
\newcommand{\std}[1]{{\small$\pm$#1}}
\newcommand{\Fk}{\mathcal{F}_k}
\newcommand{\Fg}{\mathcal{F}_{\mathrm{global}}}
\newcommand{\wbar}{\bar{\omega}}
\newcommand{\PPT}{\mathsf{PPT}}
\begin{document}
\title{\texorpdfstring{Fairis: Fairness-Aware Aggregation\\ with Provable Influence Containment\\ against Fairness Poisoning Attacks\\ in Collaborative Machine Learning}{Fairis: Fairness-Aware Aggregation with Provable Influence Containment against Fairness Poisoning Attacks}}
\author{Devharsh Trivedi\inst{1} \and
Nesrine Kaaniche\inst{2} \and
Nikos Triandopoulos\inst{3} \and
Maryline Laurent\inst{2} \and
Jackson Walters\inst{4}}
\titlerunning{Fairis: Fairness-Aware Aggregation with Influence Containment}
\authorrunning{D. Trivedi et al.}
\institute{Department of Computer Science, Bowie State University, MD, USA\\
\email{dtrivedi@bowiestate.edu}
\and SAMOVAR, T\'{e}l\'{e}com SudParis, Institut Polytechnique de Paris, Palaiseau, France
\and Department of Computer Science, Brown University, RI, USA
\and Northern Virginia Community College, VA, USA}
\maketitle
\begin{abstract}
Collaborative machine learning (ML) among financial institutions must be both group-fair and robust against deliberate adversarial manipulation. Existing fairness-aware aggregation methods remain formally vulnerable to \emph{fairness poisoning}: a malicious client that maximizes group disparity while preserving accuracy evades accuracy-based Byzantine defenses, and, in the threat model we study, \fairfed{}'s gap-based weighting can be gamed by an adversary who observes the global fairness score. We present \fairis{}, a server-side aggregation reweighting scheme in which each client's update receives the normalized weight $\omega_k=\wbar_k/\sum_j\wbar_j$ built from the unnormalized score $\wbar_k = \eta - \mathcal{F}_k$, with $\mathcal{F}_k\in[0,1]$ the local Equal Opportunity Difference (EOD) and $\eta>1$ a tunable security parameter. We prove three security properties, Monotone Weight Reduction (MWR), Demographic Participation (DP), and Non-Gamesmanship (NG), extend MWR to colluding minority coalitions, and show that combining MWR with server-side norm clipping bounds the adversary's displacement of the global model by $\omega_0 C$, strictly decreasing in its own reported disparity. Assuming clients report their fairness scores honestly, an assumption this paper does not discharge, \fairis{} is the only rule evaluated that \emph{guarantees} every client strictly positive weight while provably reducing an adversary's weight monotonically in its bias; clipped \fairfed{} sometimes reaches a lower weight but guarantees nothing and zeroes a client outright on Taiwan Credit. Against an adversary stealthy enough to evade accuracy-based defenses, staying within $0.04$ accuracy of benign, \fairis{} cuts its aggregation weight by $41$ to $54\,\%$ below a size-blind control on Taiwan. On routine non-IID partitions no aggregation rule dominates, and a uniform-weighting ablation shows that containment tracks how far the adversary's score separates from the honest mean, providing none when the honest population is already unfair.
\end{abstract}
\keywords{Collaborative learning \and Federated learning \and Group fairness \and Fairness poisoning attacks \and Byzantine robustness}
\section{Introduction}\label{sec:intro}
Financial institutions each observe only a partial view of borrowers, and data-protection regulations and commercial confidentiality prevent sharing raw records. Yet, jointly trained models could improve default prediction, pricing fairness, and compliance for all parties. Collaborative and Federated Learning (FL)~\cite{mcmahan2017,gupta2018,thapa2022} keep data local and exchange only model parameters or activations. Two distinct threats arise in such deployments, and this paper addresses both.

The first threat is \emph{unintentional bias}: heterogeneous, non-independent and identically distributed (non-IID) client data amplifies group disparity in the shared representation, harming individuals and exposing institutions to regulatory liability under frameworks such as the EU AI Act~\cite{euaiact2024}. A growing body of work proposes fairness-aware aggregation rules to mitigate it~\cite{ezzeldin2023,meerza2024glocalfair,zhang2025sffl}, yet these methods were not designed to withstand adversarial clients who deliberately inject bias into the aggregation.

The second threat is \emph{deliberate fairness poisoning}: a malicious client trains a model that maximizes group disparity while preserving accuracy, evading accuracy-based Byzantine defenses. Fairness poisoning is a targeted subclass of Byzantine behavior that stays within normal accuracy bands; we do not claim robustness to arbitrary Byzantine updates, which standard distance-based defenses target (Appendix~\ref{app:background}). EAB-FL~\cite{meerza2024eabfl} injects bias through redundant parameter space while maintaining accuracy, and Kasyap et al.~\cite{kasyap2025} show fairness-constrained optimization attacks that raise demographic parity disparity by up to 90\,\% with a single malicious client, and report that \fairfed{} fails to contain them, with disparity rising $16.7\,\%$ under random partitioning and $46.4\,\%$ under attribute-based partitioning on Adult. To our knowledge, prior fairness-aware aggregation methods have not explicitly analyzed resistance to this threat class.

We present \fairis{}, which addresses both threats with one mechanism: weighting each client by $\eta-\mathcal{F}_k$, where $\mathcal{F}_k\in[0,1]$ is the absolute local Equal Opportunity Difference (EOD; $0$ fair, $1$ maximally biased) and $\eta>1$ is a tunable security parameter in the recommended range $(1,(K{+}1)/K]$ (values near $1$ penalize bias sharply, larger values approach uniform weighting). Because the weight uses the absolute score rather than its gap from a global average, a client cannot game it by tracking the global mean, and because $\eta>1$, every client keeps strictly positive weight, preserving subgroup representation. This containment of aggregation influence holds under the threat model of Section~\ref{sec:threat}, and in particular under Assumption~(A2) that clients report their fairness scores honestly; a client that forges a low score is out of scope: ruling this out needs a proof that the score was computed correctly, which authentication of the sender does not provide (Section~\ref{sec:limitations}).

\textbf{Contributions.} (1)~We formally define the fairness poisoning adversary, a security game, and three security properties: Monotone Weight Reduction (MWR), Demographic Participation (DP), and Non-Gamesmanship (NG) (Section~\ref{sec:threat}). (2)~We propose \fairis{}, an architecture-agnostic aggregation scheme with a tunable security parameter $\eta>1$ that interpolates between maximum fairness emphasis and equal participation; the weighting formula applies wherever a server aggregates client parameters, without modifying client training (Section~\ref{sec:fairis}). (3)~We prove all three properties hold under \fairis{} and prove that \fairfed{}'s gap-based rule violates NG (Section~\ref{sec:security}); these properties bound the attacker's \emph{aggregation weight} as a function of bias and, as a new result, extend to colluding minority coalitions (Corollary~\ref{cor:coalition}). They do not directly bind downstream honest-client EOD, which additionally requires fair clients to carry a sufficient fairness signal. (4)~We demonstrate empirically on three datasets (including the Adult Income benchmark used by \fairfed{} and EAB-FL) across both FL and SplitML architectures, that \fairis{} is competitive with baselines on routine non-IID fairness (where no single method dominates) and, most importantly, sharply reduces an adversarial client's aggregation weight under active fairness poisoning, lowering fair-client EOD where the data carries sufficient fairness signal (Section~\ref{sec:experiments}).
\section{Related Work}\label{sec:related}
Fairness interventions fall into three categories by where they act in the ML pipeline: in-processing (modify the training objective on-device), post-processing (post-hoc) (calibrate the trained model after aggregation), and hybrid (both). We position \fairis{} against each of them.

\subsection{In-processing, post-processing, and hybrid fairness schemes}
In-processing methods such as local fairness-constrained optimization and client-side reweighting~\cite{mehrabi2021survey,yang2024friends} reduce local bias during training but do not coordinate across clients: under non-IID data a locally fair model may combine into a globally biased aggregate, and pre-aggregation debiasing can be undone by the collaborative update (Section~\ref{sec:experiments}); they also offer no defense against a client that \emph{deliberately} maximizes disparity. Post-processing (post-hoc) methods such as threshold optimization~\cite{hardt2016equality} and demographic-parity calibration~\cite{mehrabi2021survey} adjust the decision boundary after aggregation, but require centralized access to predictions or a labeled validation set and address only prediction-time fairness, not a shared biased representation.

Hybrid schemes such as \fairfed{}~\cite{ezzeldin2023} adjust server-side aggregation weights by the gap between each client's local fairness score and the global average. This has two weaknesses: even with local debiasing, the aggregate can be re-biased by contributions from less fair clients; and, as we prove in Theorem~\ref{thm:ng}, the gap rule can be gamed under our threat model (Section~\ref{sec:threat}) by an adversary who matches the global score, keeping near-maximum weight while injecting bias. The latter is specific to that adversarial model; relative-gap weighting remains reasonable when all clients report honestly.

\fairis{} instead acts purely at the server-side aggregation step, weighting each client by its \emph{absolute} fairness score rather than its gap from a global average. This makes the rule ungameable in the above sense, guarantees positive weight for every client, and requires no change to client training; we describe it as a fairness-aware aggregation-weighting method that complements, rather than replaces, in-processing or post-processing debiasing. Like \fairfed{}, it assumes clients report their scores honestly; a forged low score understates bias, so the score channel requires \emph{verified} computation rather than authenticated transport, a threat we treat as out of scope (Section~\ref{sec:limitations}).

\subsection{Closely related methods}
The closest fairness-aware aggregators are \fairfed{}~\cite{ezzeldin2023} (gap between local and global EOD; Adult, COMPAS, $K{=}5$ clients) and GLocalFair~\cite{meerza2024glocalfair} (global and local fairness via clustering with a Gini surrogate; CelebA, Adult, UTK Faces; Appendix~\ref{app:background}). Of recent systems, SFFL~\cite{zhang2025sffl} targets equal-opportunity group fairness in heterogeneous FL (folktables U.S. Census tasks) by reweighting clients with a distance-based surrogate for the sensitive statistics it avoids collecting; it shares Fairis's no-disclosure goal but is full-model FL, reweights by update distance rather than a measured score, assumes honest participation, and offers no poisoning defense. SplitLPF~\cite{chen2024splitlpf} is a split-learning system but targets \emph{contribution} fairness, not group fairness, and its contribution weights can be inflated by an adversary reporting agreeable gradients. PrivEdge-SL~\cite{santhoshkumar2026privedgesl} is a single-client split-learning IoT framework with no cross-client aggregation or group-fairness objective, and is not a comparator. Among these, only SFFL targets group fairness without sensitive-attribute sharing, yet still offers no formal poisoning resistance: the gap \fairis{} fills.

FairTrade~\cite{badar2024fairtrade} instead treats fairness and balanced accuracy as competing objectives and searches for Pareto-optimal trade-offs. It optimizes where a client should sit on the fairness-accuracy frontier under honest participation, whereas we ask what an aggregation rule can guarantee when a participant is adversarial. We flag one point against our own interest: in the evaluation of Kasyap et al.~\cite{kasyap2025}, FairTrade contains the fairness-constrained optimization attack markedly better than \fairfed{} (disparity $0.049$ versus $0.442$ on Adult with one malicious client), so it, rather than \fairfed{}, is the strongest published baseline for this threat class. We do not re-implement it here because its multi-objective local optimization is not a drop-in aggregation rule and its published setting differs from ours, but a direct comparison against FairTrade is the most important experiment missing from this paper.

\subsection{Fairness poisoning attacks}
EAB-FL~\cite{meerza2024eabfl} is the first model poisoning attack specifically targeting group unfairness: it uses Layer-wise Relevance Propagation to identify redundant parameter spaces and injects bias through them while maintaining overall accuracy. Kasyap et al.~\cite{kasyap2025} independently propose a fairness-constrained optimization attack reaching a $90\,\%$ disparity increase at its strongest operating point, and report that \fairfed{} does not contain it. Neither attack has been evaluated against an absolute-score weighted scheme; to the best of our knowledge, \fairis{} is the first aggregation rule designed by construction to resist this attack class.

\subsection{Novelty summary}
Table~\ref{tab:novelty} (Appendix~\ref{app:novelty}) positions \fairis{} against six related methods on five dimensions: group fairness (GF), per-client scope (PC), architecture independence (AI), guaranteed positive weight (PW), and formal attack resistance (AR). \fairis{} is the only one combining all five. For AI, we mark $\checkmark$ any method whose weighting rule is \emph{validated} on two or more aggregation architectures; among those surveyed, only \fairis{} qualifies (validated on both FL and SplitML).
\section{Adversary Threat Model}\label{sec:threat}
We formalize the adversary's capabilities, define a game-based security notion, and state the three properties \fairis{} is designed to satisfy. Background definitions for collaborative learning architectures and group-fairness metrics are given in Appendix~\ref{app:background}.

\subsection{System and adversary model}
The system comprises $K\geq2$ clients and a central aggregation server running any collaborative learning protocol: Federated Learning (FL), Split Learning (SL), or SplitML. We index clients by $k\in\{1,\dots,K\}$ and communication rounds by $t\in\{1,\dots,T\}$. The case $K\geq2$ is assumed throughout, since the strict-monotonicity property below is vacuous for a single client (whose normalized weight is always $1$); with a minority adversary ($|\mathcal{B}|<K/2$, A1) the system in fact has $K\geq3$. Each client $C_k$ holds a private local dataset $\mathcal{D}_k$ and trains a local model $\mathcal{M}_k$ on that dataset. At each communication round, clients transmit a subset of model parameters $\theta_k$ (which may be the full model, a set of shared layers, or gradients, depending on the architecture) and a scalar local fairness score $\Fk \in [0,1]$ to the aggregation server. The server applies the aggregation protocol $\Pi$ to compute updated shared parameters and then broadcasts them to clients. The adversary $\mathcal{A}$ controls a minority subset $\mathcal{B}\subset\{C_k\}_{k=1}^K$ of malicious clients, with $|\mathcal{B}|<K/2$. $\mathcal{A}$ is \emph{probabilistic polynomial-time} ($\PPT$), meaning its strategy runs in time polynomial in a formal security parameter $\lambda$ instantiated by $(K,\eta)$.

\textbf{Capabilities.} $\mathcal{A}$ may observe the broadcast aggregated parameters $\theta$ after every round. In gap-based schemes such as \fairfed{}, $\mathcal{A}$ can also infer the server's aggregate fairness statistic $\Fg = \sum_k (n_k/\sum_j n_j)\,\Fk$, because it is a deterministic function of the reported local scores. Following \fairfed{}, we use this size-weighted mean of local scores as the server-side statistic. It is an \emph{estimator} of population-level EOD, not an identity: EOD is a ratio of group- and label-conditional counts and does not decompose as a client-size-weighted average of local EODs. Section~\ref{sec:limitations} records the consequence. $\mathcal{A}$ may set the training objective of each malicious client $C_k \in \mathcal{B}$ arbitrarily, subject to the utility constraint below.

\textbf{Adversarial goal.} The adversary aims to maximize group disparity injected into honest clients' shared representation while keeping prediction accuracy within $\varepsilon$ of the benign baseline, and maintaining the highest possible aggregation weight to amplify the attack's effect. At each round $t$, $\mathcal{A}$ treats the other clients' scores $\{\mathcal{F}_j^t\}_{j\notin\mathcal{B}}$ as fixed constants when choosing its own parameters; this is the algebraic condition the monotonicity argument requires (the other scores are held fixed while $\Fk$ varies) and does not assume the adversary can read those scores directly. Since local fairness scores are computed on private local data, they are not in general observable; where a scheme does expose a global summary (as in the gap-based \fairfed{} update below), we state that dependence explicitly. The optimization is over the malicious clients' parameters $\theta_\mathcal{B}$ only:
\begin{equation}\label{eq:goal}
\max_{\theta_\mathcal{B}}\;\sum_{k\in\mathcal{B}}\omega_k\cdot\Fk(\theta_k) \quad\text{s.t.}\quad \mathrm{Acc}(\theta_k)\geq\mathrm{Acc}_\mathrm{benign}-\varepsilon,\;\forall\,k\in\mathcal{B},
\end{equation}
where $\omega_k=(\eta-\Fk)/\sum_j(\eta-\mathcal{F}_j)$ under \fairis{}, with $\{\mathcal{F}_j\}_{j\notin\mathcal{B}}$ fixed at their observed values. The objective is nonlinear in the malicious scores, since each weight $\omega_k$ itself depends on $\Fk$. The accuracy constraint makes the attack stealthy: a model that maintains high prediction accuracy while producing unfair outputs cannot be detected by standard Byzantine defenses. This formulation captures both the EAB-FL attack~\cite{meerza2024eabfl} (which injects bias through redundant parameter space) and the Kasyap et al.\ attack~\cite{kasyap2025} (which uses fairness-constrained gradient manipulation).

\textbf{Assumptions.} The threat model rests on five explicit assumptions. (A1)~Malicious clients are a minority, $|\mathcal{B}|<K/2$. (A2)~Clients report their scalar fairness scores $\Fk$ honestly: the adversary controls each malicious client's model training but \emph{not} its fairness-score reporting. This assumption is critical, because a client that could arbitrarily falsify its score would circumvent the weighting mechanism entirely; such forged-score attacks are out of scope and require a verifiable score computation (Section~\ref{sec:limitations}). (A3)~The adversary is $\PPT$ and may observe every broadcast aggregate but cannot read other clients' private data or personalized parameters. (A4)~The server executes the aggregation protocol $\Pi$ faithfully (honest-but-curious server). (A5)~Each malicious client must keep its accuracy within $\varepsilon$ of the benign baseline, so the attack is stealthy with respect to accuracy-based defenses. All three security properties below are stated under these assumptions.

\subsection{Security game}
We define the security experiment $\mathsf{FairAtk}_{\mathcal{A},\Pi}(\lambda)$ for an aggregation protocol $\Pi$ and a $\PPT$ adversary $\mathcal{A}$. The challenger runs $K$ honest clients under protocol $\Pi$ for $T$ communication rounds and provides $\mathcal{A}$ with the aggregation rule. Here, $\Pi$ refers to any concrete aggregation scheme instantiated with its specific parameters: \fedavg{} (no fairness parameter), \fairfed{} with fairness budget $\beta$, or \fairis{} with inclusiveness parameter $\eta$. $\mathcal{A}$ selects one malicious client $C_0$ and sets its objective to solve Eq.~(\ref{eq:goal}). At round $T$, the experiment returns $(\omega_0^T,\mathcal{F}_0^T)$: the weight assigned to $C_0$ and the EOD of $C_0$'s model. Protocol $\Pi$ is \emph{attack-resistant} if $\omega_0^T$ is a strictly decreasing function of $\mathcal{F}_0^T$: a more biased attacker always receives less influence. This definition is intentionally framed around aggregation \emph{weight} rather than downstream honest-client EOD; the link from weight containment to fairness improvement is data-dependent and is examined empirically in Section~\ref{sec:experiments}.

\subsection{Security properties}
The following three properties together characterize a secure fairness-aware aggregation rule. A rule satisfying all three ensures that an adversary who increases group disparity automatically reduces its own influence, never excludes any client's demographic subgroup, and cannot recover influence by tracking the global fairness average.

\begin{property}[Monotone Weight Reduction (MWR)]\label{prop:mwr}
For all $\PPT$ adversaries $\mathcal{A}$ and all training rounds, $\partial\omega_k/\partial\Fk < 0$ when the other clients' scores $\{\mathcal{F}_j\}_{j\ne k}$ are held fixed (and $K\geq2$). A higher local fairness score (more bias) always yields strictly less aggregation weight.
\end{property}

\begin{property}[Demographic Participation (DP)]\label{prop:dp}
$\omega_k > 0$ for all clients $k$ and all rounds. No client is ever fully excluded from aggregation; each client's demographic subgroup remains represented in the shared model throughout training.
\end{property}

\begin{property}[Non-Gamesmanship (NG)]\label{prop:ng}
No $\PPT$ strategy simultaneously maintains $\Fk > 0$ and achieves the aggregation weight of a perfectly fair ($\Fk=0$) client. An adversary cannot recover the influence of a fair client without becoming fair itself.
\end{property}

\begin{remark}[Scope of the security properties]\label{rem:scope}
Properties~\ref{prop:mwr}--\ref{prop:ng} bound the adversary's aggregation \emph{weight} as a function of its bias; they make no direct claim about honest-client EOD, which additionally requires the honest majority to carry sufficient fairness signal (see Section~\ref{sec:experiments} for empirical evidence of when the two decouple).
\end{remark}
\section{Proposed Solution: \fairis{}}\label{sec:fairis}
Having established what a secure aggregation rule must satisfy, we introduce \fairis{} and show it satisfies all three security properties simultaneously. \fairis{} is architecture-agnostic: the aggregation server receives client parameters $\theta_k$ and scalar fairness scores $\Fk$, applies the Fairis weighting, and broadcasts the result, unchanged whether the protocol is FL or SplitML.

\subsection{Security parameter $\eta$}
The scalar $\eta>1$ is the practitioner-tunable security parameter. The only hard constraint is $\eta>1$, which guarantees $\wbar_k=\eta-\Fk>0$ for all $\Fk\in[0,1]$ (DP) and that the weight is strictly decreasing in $\Fk$ (MWR). We recommend the operating range $(1,(K{+}1)/K]$ as an empirical cutoff, not a requirement of the theory (which needs only $\eta>1$): by the decomposition in Eq.~(\ref{eq:limit}) the deviation $\omega_k-1/K$ scales as $O(1/\eta)$, so as $\eta$ grows the weights flatten toward the uniform share $1/K$ and the down-weighting of biased clients becomes marginal beyond this point. The most biased client's normalized weight is not in general near $1/K$ at $\eta=(K{+}1)/K$; by Eq.~(\ref{eq:limit}) a client more biased than the mean always sits strictly below $1/K$, by an amount that depends on the other clients' scores. We evaluate $\eta=1.01$ (high fairness emphasis) and $\eta=1.32$ (moderate); $\eta=1.32$ exceeds $(K{+}1)/K$ for $K\geq5$, so DP and MWR still hold but security differentiation weakens, and a practitioner should set $\eta$ per $K$. When $\eta$ is not specified by the practitioner, the following data-driven heuristic selects it from the observed fairness scores at the current round:
\begin{equation}\label{eq:eta}
s_1 = \frac{\min_j \mathcal{F}_j}{K},\qquad s_2 = \Bigl|\, s_1 - \tfrac{1}{K} \,\Bigr|,\qquad \eta = 1 + \max\!\bigl(\min(s_1,s_2),\,10^{-3}\bigr).
\end{equation}
The heuristic selects $\eta$ close to $1$ when the minimum observed score is small and larger under high heterogeneity; the $\max(\cdot,10^{-3})$ clamp keeps $\eta>1$ strictly even when $\min_j \mathcal{F}_j=0$, preserving DP. The heuristic is a practical default rather than an optimum; all reported experiments use the fixed values $\eta\in\{1.01,1.32\}$, and a sensitivity study of the heuristic is left to future work. Figure~\ref{fig:weight} shows how the weight on biased and fair clients varies with EOD and $\eta$.

\begin{figure}[ht]
  \centering
  \includegraphics[width=0.99\columnwidth]{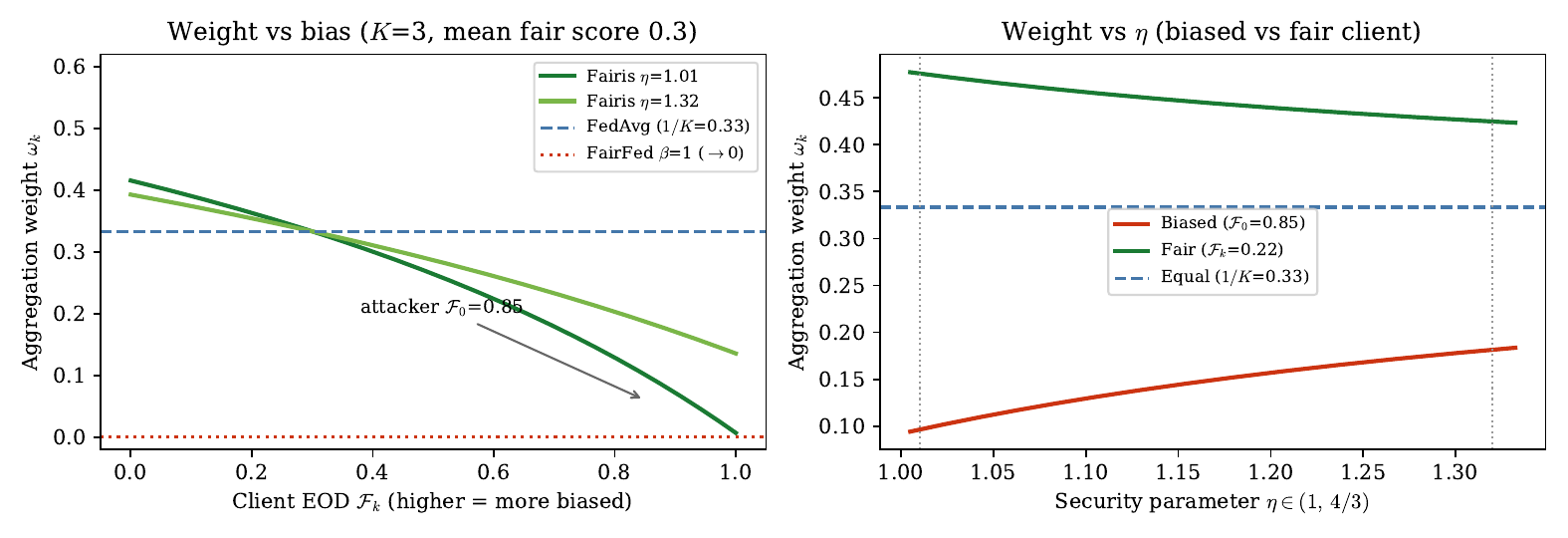}
  \caption{Effect of the security parameter $\eta$ on \fairis{} aggregation weights ($K=3$). \emph{Left}: aggregation weight $\omega_k$ as a function of a client's Equal Opportunity Difference (EOD) for two values of $\eta$, assuming two other clients with mean EOD$\,{=}\,0.30$. All \fairis{} curves are strictly decreasing (Theorem~\ref{thm:mwr}); smaller $\eta$ penalizes bias more sharply. \fedavg{} is constant at $1/K\,{=}\,0.33$ regardless of bias. \fairfed{} ($\beta=1$) drives highly biased clients towards zero weight (Demographic Participation violated for $\beta=1$; at $\beta=0$, \fairfed{} recovers \fedavg{} and DP holds trivially). \emph{Right}: weights plotted across the valid range $\eta\in(1,4/3)$ for a fixed biased client ($\mathcal{F}_0=0.85$) and a fair client ($\mathcal{F}_k=0.22$). The two experimental values $\eta\in\{1.01,1.32\}$ are marked with dotted lines; both are valid for $K=3$ since the upper bound is $(K+1)/K=4/3\approx1.33$.}
  \label{fig:weight}
\end{figure}

\paragraph{Large-$\eta$ limit (when \fairis{} reduces to \fedavg{}).} Let $\bar{\mathcal{F}}=\frac{1}{K}\sum_{j}\mathcal{F}_j$ denote the mean local score, and note that the normalizing denominator satisfies $\sum_{j}(\eta-\mathcal{F}_j)=K\eta-\sum_{j}\mathcal{F}_j$. The normalized weight then decomposes exactly as
\begin{equation}\label{eq:limit}
\omega_k \;=\; \frac{\eta-\mathcal{F}_k}{\sum_{j}(\eta-\mathcal{F}_j)}
\;=\; \frac{1}{K} \;+\; \frac{\bar{\mathcal{F}}-\mathcal{F}_k}{\,K\eta-\sum_{j}\mathcal{F}_j\,},
\end{equation}
a uniform $1/K$ term plus a correction whose numerator is the signed gap $\bar{\mathcal{F}}-\mathcal{F}_k$ (positive for fairer-than-average clients, negative for more-biased ones) and whose denominator is $K\eta-\sum_{j}\mathcal{F}_j>0$. The correction is $O(1/\eta)$; hence $\omega_k\to1/K$ as $\eta\to\infty$ and \fairis{} converges to the \emph{uniform} average. This limit is not \fedavg{}: \fedavg{} weights are size-proportional, $n_k/\sum_j n_j$, and the two coincide only in the special case where every client holds equally many records. The collapse is also one-sided: no \emph{finite} $\eta$ reproduces \fairfed{}'s gap-based rule for $\beta>0$, and since \fairfed{} at $\beta=0$ is size-proportional \fedavg{} while \fairis{} at $\eta\to\infty$ is uniform, the two families share a common limit only under equal client sizes. Because the deviation from the uniform share is the signed gap $\bar{\mathcal{F}}-\Fk$ divided by $K\eta-\sum_j\mathcal{F}_j$, increasing $\eta$ smoothly erases the fairness differentiation that the recommended bound $(K{+}1)/K$ preserves.

\paragraph{Per-$K$ guidance.} The upper bound $(K{+}1)/K$ is $4/3$ for $K=3$, $1.2$ for $K=5$, and $1.1$ for $K=10$. Thus $\eta=1.01$ is in range for all $K$, whereas $\eta=1.32$ is in range only for $K=3$; for $K\geq5$ it exceeds the bound (DP and MWR still hold, but security differentiation weakens), and such cases are flagged as out-of-range in the tables.

\subsection{Size-weighted variant (\fairisn{})}\label{sec:fairisn}
The rule of Eq.~(\ref{eq:limit}) weights purely by fairness and discards the client-size information that \fedavg{} uses. That is a deliberate simplification, but it has two costs that our ablation makes visible (Section~\ref{sec:standard}): a small client and a large client with equal scores receive equal influence, and when all scores coincide the rule degenerates to uniform averaging rather than to \fedavg{}. We therefore also evaluate the size-weighted variant
\begin{equation}\label{eq:fairisn}
\omega_k^{\textsc{n}} \;=\; \frac{n_k\,(\eta-\Fk)}{\sum_j n_j\,(\eta-\mathcal{F}_j)},
\end{equation}
which multiplies the unnormalized \fairis{} score by the client's sample count.

\begin{proposition}[Properties carry over to \fairisn{}]\label{prop:fairisn}
For $\eta>1$, $K\geq2$, and $n_k>0$ for all $k$, the rule of Eq.~(\ref{eq:fairisn}) satisfies Monotone Weight Reduction (Theorem~\ref{thm:mwr}), Demographic Participation (Theorem~\ref{thm:dp}), Non-Gamesmanship (Theorem~\ref{thm:ng}), and the influence bound of Proposition~\ref{prop:influence} with $\omega_0$ replaced by $\omega_0^{\textsc{n}}$.
\end{proposition}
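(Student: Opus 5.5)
We verify the three properties directly from the closed form of Eq.~(\ref{eq:fairisn}), writing $a_k = n_k(\eta-\Fk)$ and $S=\sum_j a_j$, so that $\omega_k^{\textsc{n}} = a_k/S$. The key observation is that for $\eta>1$, $\Fk\in[0,1]$ and $n_k>0$, every $a_k$ is strictly positive. It satisfies $n_k(\eta-1)\le a_k\le n_k\eta$. Everything reduces to the same algebra used for the unweighted rule, with $n_k$ acting as a positive multiplier.

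\textbf{Demographic Participation and Monotone Weight Reduction.} DP is immediate: a ratio of strictly positive quantities is strictly positive, mirroring Theorem~\ref{thm:dp}. For MWR, hold $\{\mathcal{F}_j\}_{j\ne k}$ fixed and write $S = a_k + R_k$ with $R_k=\sum_{j\ne k}a_j$. Then $\omega_k^{\textsc{n}} = a_k/(a_k+R_k)$ is strictly increasing in $a_k$ whenever $R_k>0$, which holds because $K\ge2$ and all $a_j>0$. Since $\partial a_k/\partial\Fk=-n_k<0$, the chain rule gives
\begin{equation*}
\frac{\partial\omega_k^{\textsc{n}}}{\partial\Fk} \;=\; -\,\frac{n_k R_k}{S^2}\;<\;0 .
\end{equation*}
I would also note, following the coalition argument, that the same computation applied to a sum of the weights in $\mathcal{B}$ gives the coalition extension. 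That extension is not claimed in the proposition, however.

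\textbf{Non-Gamesmanship.} I would phrase NG as: for fixed other scores and fixed $n_k$, the weight at $\Fk>0$ is strictly smaller than the weight at $\Fk=0$. This follows from strict monotonicity, by integrating the negative derivative over $[0,\Fk]$, or directly from $a_k(\Fk)<a_k(0)$ together with the monotonicity of $x\mapsto x/(x+R_k)$.

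The one subtlety is that under \fairisn{} the client size $n_k$ is itself a lever. A biased client might appear to match a fair client's weight by inflating $n_k$. I would handle this by stating NG, as in Property~\ref{prop:ng}, relative to the client's own weight when perfectly fair, with the data sizes treated as fixed inputs outside the adversary's optimization variables $\theta_\mathcal{B}$. This is consistent with Eq.~(\ref{eq:goal}), which optimizes only over parameters. I would flag explicitly that size misreporting is out of scope, analogous to (A2). I expect this scoping step to be the main obstacle: the algebra is trivial, but the claim must be worded so that it is not false against a size-inflating adversary.

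\textbf{Influence bound.} Proposition~\ref{prop:influence} bounds the adversary's displacement as $\omega_0 C$ using only two facts. First, the aggregate is a convex combination $\sum_k\omega_k\theta_k$ with nonnegative weights summing to one. Second, the clipped update has norm at most $C$. Both facts hold for $\omega^{\textsc{n}}$, since the weights are positive and normalized by $S$. Repeating that argument verbatim therefore yields displacement at most $\omega_0^{\textsc{n}}C$. By the MWR computation above, this bound is strictly decreasing in $\mathcal{F}_0$.
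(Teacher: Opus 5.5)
Your proposal is correct and follows essentially the same route as the paper: the same derivative $-n_k\sum_{j\ne k}n_j(\eta-\mathcal{F}_j)/W^2$ for MWR, positivity of $n_k(\eta-\Fk)$ for DP, strict monotonicity for NG, and clipping plus positivity of $\omega_0^{\textsc{n}}$ for the influence bound. Your explicit scoping of $n_k$ as a fixed input outside the adversary's control matches the paper's treatment of each $n_k$ as a constant; note also that the influence bound needs only $\omega_0^{\textsc{n}}>0$ and $\lVert\Delta_0\rVert_2\le C$, not the convex-combination structure you cite.
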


\begin{proof}
Each $n_k$ is a positive constant with respect to every $\mathcal{F}_j$, so writing $W=\sum_j n_j(\eta-\mathcal{F}_j)>0$ we obtain $\partial\omega_k^{\textsc{n}}/\partial\Fk = -n_k\bigl(W-n_k(\eta-\Fk)\bigr)/W^2 = -n_k\sum_{j\ne k}n_j(\eta-\mathcal{F}_j)/W^2<0$ for $K\geq2$, giving MWR. Positivity is immediate from $n_k>0$ and $\eta-\Fk>0$, giving DP. NG holds because $\omega_k^{\textsc{n}}$ depends on the client's own \emph{absolute} score and never on its distance from a population statistic, so the fixed point exploited against \fairfed{} does not exist; strict monotonicity then forbids any biased client from attaining the weight of a perfectly fair one. The influence bound follows from clipping and DP exactly as in Proposition~\ref{prop:influence}.
\end{proof}

Two structural consequences distinguish \fairisn{} from \fairis{}. First, as $\eta\to\infty$ the weights converge to $n_k/\sum_j n_j$, so the large-$\eta$ limit is \fedavg{} proper rather than uniform averaging. Second, when all clients report the same score the rule reduces to \fedavg{} rather than to $1/K$, so the saturated regime degrades to the standard baseline instead of to a size-blind one. \fairisn{} therefore removes the confound between fairness weighting and the removal of size proportionality while preserving every guarantee; whether it also improves measured fairness is an empirical question addressed in Section~\ref{sec:standard}.

\subsection{Aggregation algorithm}
Algorithm~\ref{alg:fairis} states the complete \fairis{} protocol in general form, applicable to any collaborative learning architecture. Clients train locally on their private data and compute a local fairness score. They transmit their shared parameters $\theta_k$ (the full model in FL, the shared layers in SplitML, or any other designated subset) and the scalar score $\Fk$ to the server. The server computes the Fairis weights and broadcasts the weighted aggregate. Four key properties follow directly from the weight formula: (i)~positive weights, since $\eta>1$ and $\Fk\in[0,1]$ imply $\wbar_k = \eta-\Fk>0$ always (DP satisfied); (ii)~monotone weighting, since $\partial\omega_k/\partial\Fk < 0$ (MWR satisfied); (iii)~no client pre-processing required, as clients need only compute their fairness score after standard local training; and (iv)~architecture independence, since the weighting formula operates only on the fairness scores and is indifferent to the structure of $\theta_k$.

\fairis{} adds no asymptotic overhead over \fedavg{}: the server performs $O(K)$ extra scalar operations per round ($K$ subtractions, one sum, $K$ divisions), independent of model size, and each client reuses the forward pass already performed during evaluation to compute its score. Appendix~\ref{app:complexity} gives a full account, including why \fedavg{}-style convergence is expected while a formal rate under round-varying fairness weights remains an open question.

\begin{algorithm}[ht]
\caption{\fairis{}: fairness-score-weighted aggregation for any collaborative learning setting}
\label{alg:fairis}
\begin{algorithmic}[1]
\REQUIRE Clients $\{C_k\}_{k=1}^K$ ($K\geq2$), fairness metric $\mathcal{F}$, security parameter $\eta>1$ (recommended range $(1,(K+1)/K]$), clipping bound $C>0$
\STATE $\theta\leftarrow$ common initialization
\FOR{each training round $t$}
  \FOR{each client $C_k$}
    \STATE Train local model $\mathcal{M}_k$ on $\mathcal{D}_k$ starting from $\theta$
    \STATE Compute local fairness score $\Fk = \mathcal{F}(\mathcal{M}_k, \mathcal{D}_k) \in [0,1]$
    \STATE Transmit update $\Delta_k\leftarrow\theta_k-\theta$ and $\Fk$ to server
  \ENDFOR
  \IF{$\eta$ not pre-specified}
    \STATE $\eta\leftarrow\max(\min(s_1,s_2),10^{-3})+1$ via Eq.~(\ref{eq:eta})
  \ENDIF
  \FOR{each client $C_k$}
    \STATE $\Delta_k\leftarrow\Delta_k\cdot\min\!\left(1,\;C/\lVert\Delta_k\rVert_2\right)$ \hfill\COMMENT{norm clipping: $\lVert\Delta_k\rVert_2\le C$}
    \STATE $\wbar_k\leftarrow\eta-\Fk$ \hfill\COMMENT{guaranteed $> 0$ since $\eta>1$, $\Fk\le1$}
  \ENDFOR
  \STATE $\omega_k\leftarrow\wbar_k/\sum_j\wbar_j$ for all $k$
  \STATE $\theta\leftarrow\theta+\sum_k\omega_k\cdot\Delta_k$;\quad broadcast $\theta$ to all clients
\ENDFOR
\end{algorithmic}
\end{algorithm}

A worked example of the weighting and a detailed account of five structural limitations of \fairfed{} that \fairis{} resolves are given in Appendix~\ref{app:worked} and Appendix~\ref{app:fairfed_limits}.
\section{Security Analysis}\label{sec:security}
All three security properties of \fairis{} follow from the closed-form weight formula $\omega_k = (\eta-\Fk)/\sum_j(\eta-\mathcal{F}_j)$; no cryptographic assumption beyond standard $\PPT$ complexity is required. The proofs are elementary algebra over the weight map; we call the resulting facts \emph{security} properties because they are precisely the primitives an aggregation rule needs to contain a fairness-poisoning adversary (Section~\ref{sec:threat}), not because they rest on a cryptographic hardness assumption. For space, they are deferred to Appendix~\ref{app:proofs}.

\paragraph{Machine-checked proofs.} The results of this section have been
independently formalized and checked in the Lean proof assistant by one of the
authors. The development covers Theorems~\ref{thm:mwr}, \ref{thm:dp} and
\ref{thm:ng} and Corollary~\ref{cor:coalition} for \fairis{}; the corresponding
participation, normalization, monotonicity and non-gamesmanship results for
\fairisn{} (Proposition~\ref{prop:fairisn}); $\eta>1$ as the \emph{exact}
condition for strictly positive unnormalized mass at every score in $[0,1]$;
the clipped displacement bound of Proposition~\ref{prop:influence}, together
with the accompanying negative result that no finite uniform bound holds
without clipping; a concrete instance in which every local EOD and their
client-size-weighted average are zero while the pooled EOD is $49/50$, which is
the counterexample underlying limitation~(iii) of
Section~\ref{sec:limitations}; and the \fairfed{} matched-gap fixed point at
the honest clients' size-weighted mean, with the exact geometric error formula
for repeated matching. Formalization establishes that the algebra is correct;
it does not validate the modeling. Assumptions (A1) and (A2) are premises of
the Lean statements rather than consequences of them, and the sharpest
limitation of this work, honest score reporting, is untouched by it.

\begin{theorem}[Monotone Weight Reduction (MWR)]\label{thm:mwr}
Under \fairis{} with $K\geq2$ clients, the aggregation weight $\omega_k = (\eta-\Fk)/\sum_j(\eta-\mathcal{F}_j)$ is strictly decreasing in $\Fk$ for every $k$, holding the other clients' scores $\{\mathcal{F}_j\}_{j\ne k}$ fixed. (For $K=1$ the normalized weight is constant at $\omega_k\equiv1$, so $K\geq2$ is necessary for strictness.)
\end{theorem}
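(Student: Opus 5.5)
The plan is to treat $\omega_k$ as a function of the single variable $x=\Fk\in[0,1]$, with the other scores held fixed, and show its derivative is strictly negative. I will write $S_{-k}=\sum_{j\ne k}(\eta-\mathcal{F}_j)$, which is a constant with respect to $x$. Then
\[
\omega_k(x)=\frac{\eta-x}{(\eta-x)+S_{-k}}.
\]
Positivity of each summand follows from $\eta>1\ge\mathcal{F}_j$; this is the same fact that underlies Demographic Participation. It gives $\eta-x>0$ and, for $K\geq2$, $S_{-k}>0$ strictly, since the sum has at least one positive term. The denominator $W(x)=(\eta-x)+S_{-k}$ is therefore strictly positive on all of $[0,1]$, so $\omega_k$ is smooth there.

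Next I apply the quotient rule, mirroring the computation already used in the proof of Proposition~\ref{prop:fairisn} with all $n_j=1$:
\[
\frac{\partial\omega_k}{\partial\Fk}=\frac{-W-(\eta-x)(-1)}{W^2}=-\frac{S_{-k}}{W^2}.
\]
This is strictly negative exactly because $S_{-k}>0$. As a derivative-free alternative that also covers the endpoints cleanly, I can write $\omega_k(x)=1-S_{-k}/(\eta-x+S_{-k})$. As $x$ increases, $\eta-x+S_{-k}$ strictly decreases while staying positive, so $S_{-k}/(\cdot)$ strictly increases, and $\omega_k$ strictly decreases. I would present this form as the main argument, because it gives strict decrease on the closed interval directly, not only a negative derivative in the interior.

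For the parenthetical claim about $K=1$, I note that $S_{-k}$ is then an empty sum equal to $0$. So $\omega_k=(\eta-x)/(\eta-x)\equiv1$, which is constant, and the hypothesis $K\geq2$ is therefore necessary for strictness.

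There is no real obstacle here. The only point needing care is to make explicit that $S_{-k}>0$ uses both $\eta>1$ (so that no term vanishes, even at $\mathcal{F}_j=1$) and $K\geq2$ (so that the sum is nonempty). I would also remark that the argument never uses the adversary's strategy, which is why the statement holds uniformly over all $\PPT$ adversaries and all rounds, as Property~\ref{prop:mwr} requires.
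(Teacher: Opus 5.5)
Your proposal is correct and follows the paper's proof essentially verbatim. It uses the same constant $S_{-k}=\sum_{j\ne k}(\eta-\mathcal{F}_j)>0$ (from $\eta>1$ and $K\geq2$), the same derivative $-S_{-k}/(S_{-k}+\eta-\mathcal{F}_k)^2<0$, and the same empty-sum observation for $K=1$. Your derivative-free rewriting $\omega_k=1-S_{-k}/(\eta-\mathcal{F}_k+S_{-k})$ is a harmless extra that gives strict decrease on the closed interval directly, though the derivative argument already yields this by the mean value theorem.
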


\begin{corollary}[Worst-case weight bound]\label{cor:opt}
Under \fairis{}, a $\PPT$ adversary maximizes its aggregation weight by setting $\Fk=0$, i.e.\ by becoming perfectly fair. Any positive group-disparity strictly and continuously reduces the adversary's weight below this maximum. Note that setting $\Fk=0$ achieves the maximum \emph{weight} but zero \emph{attack impact} (Eq.~\ref{eq:goal}); the optimal attack strategy balances these two objectives.
\end{corollary}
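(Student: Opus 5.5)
The plan is to read Corollary~\ref{cor:opt} directly off Theorem~\ref{thm:mwr}, with the other scores $\{\mathcal{F}_j\}_{j\ne k}$ held fixed as in the threat model. Define $g(x)=(\eta-x)/\bigl((\eta-x)+S\bigr)$ on $[0,1]$, where $S=\sum_{j\ne k}(\eta-\mathcal{F}_j)$. Because $\eta>1$ and every $\mathcal{F}_j\le1$, each summand is positive, so $S>0$ whenever $K\geq2$. The denominator $(\eta-x)+S$ is then bounded below by $\eta-1+S>0$ on $[0,1]$. Hence $g$ is a rational function with no pole on the interval, and it is continuous and indeed smooth there.

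Next I compute the derivative, $g'(x)=-S/\bigl((\eta-x)+S\bigr)^2<0$, which is the content of Theorem~\ref{thm:mwr}. It follows that $g$ is strictly decreasing on $[0,1]$. Its unique maximizer over the feasible score range is therefore $x=0$, with maximal weight $g(0)=\eta/(\eta+S)$. For any $\Fk\in(0,1]$ we get $g(\Fk)<g(0)$, and continuity of $g$ gives the ``continuously reduces'' clause. I can also make the reduction quantitative via the mean value theorem:
\[
g(0)-g(\Fk)\;\geq\;\Fk\cdot\frac{S}{(\eta+S)^2}\;>\;0.
\]
This also yields a uniform lower bound on the penalty per unit of bias.

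Finally I address the remark about attack impact. At $\Fk=0$ the objective term $\omega_k\Fk$ in Eq.~(\ref{eq:goal}) is zero, so the weight-maximizing choice contributes no disparity. Meanwhile $h(x)=x\,g(x)$ vanishes at $0$ and is positive for $x>0$. So the optimum of Eq.~(\ref{eq:goal}), if one exists, is at an interior or boundary point with $\Fk>0$, and it trades weight against bias. I would state this only qualitatively, since the accuracy constraint (A5) restricts which $\Fk$ are achievable.

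The main obstacle is not the algebra but framing. ``Maximizes its aggregation weight'' must be read as maximizing over its achievable scores with the other scores fixed. I must also note that by (A2) the adversary cannot report $0$ without actually having EOD $0$. The $\PPT$ qualifier plays no role beyond restricting the adversary to choices of $\theta_k$, which it influences only through $\Fk$.
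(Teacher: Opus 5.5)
Your proposal is correct and follows the paper's route. It reads the result directly off the strict monotonicity of Theorem~\ref{thm:mwr}, which gives the maximum $\eta/(\eta+S)$ at $\Fk=0$ and a strictly smaller weight for every $\Fk>0$. Your continuity argument (the denominator is at least $\eta-1+S>0$) and the mean-value bound $g(0)-g(\Fk)\geq \Fk\,S/(\eta+S)^2$ are valid additions; the paper's three-line proof leaves the ``continuously'' clause implicit.
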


\begin{corollary}[Coalition weight bound]\label{cor:coalition}
Under \fairis{}, the total aggregation weight of any colluding minority coalition $\mathcal{B}$ ($|\mathcal{B}|<K/2$) is strictly decreasing in each member's bias. Formally, letting $W_\mathcal{B}=\sum_{k\in\mathcal{B}}(\eta-\mathcal{F}_k)$ and $W_{\neg\mathcal{B}}=\sum_{k\notin\mathcal{B}}(\eta-\mathcal{F}_k)$, we have for every $k'\in\mathcal{B}$:
\[
\frac{\partial}{\partial\mathcal{F}_{k'}}\!\sum_{k\in\mathcal{B}}\omega_k
= \frac{-W_{\neg\mathcal{B}}}{(W_\mathcal{B}+W_{\neg\mathcal{B}})^2} < 0,
\]
since $W_{\neg\mathcal{B}}>0$ (honest majority guaranteed by~(A1)). The coalition's maximum total weight is achieved only when all members become perfectly fair, in which case the attack has zero bias impact (Eq.~\ref{eq:goal}). MWR thus extends to colluding coalitions without additional assumptions, formally bounding the adversarial threat even under coordinated multi-client attacks.
\end{corollary}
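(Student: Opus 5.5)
The plan is to compute the derivative of the coalition's total weight directly from the closed-form \fairis{} weights, mirroring the proof of Theorem~\ref{thm:mwr} with the single client $k$ replaced by the block $\mathcal{B}$. The first step is to write $\sum_{k\in\mathcal{B}}\omega_k = W_\mathcal{B}/(W_\mathcal{B}+W_{\neg\mathcal{B}})$, which holds because every $\omega_k$ shares the normalizer $\sum_j(\eta-\mathcal{F}_j)=W_\mathcal{B}+W_{\neg\mathcal{B}}$. Fix $k'\in\mathcal{B}$ and hold all other scores constant, as in the threat model. Then $W_\mathcal{B}$ depends on $\mathcal{F}_{k'}$ with $\partial W_\mathcal{B}/\partial\mathcal{F}_{k'}=-1$, while $W_{\neg\mathcal{B}}$ does not depend on $\mathcal{F}_{k'}$ at all.

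The second step is the quotient rule applied to $f(W_\mathcal{B})=W_\mathcal{B}/(W_\mathcal{B}+W_{\neg\mathcal{B}})$. Its derivative in $W_\mathcal{B}$ is $W_{\neg\mathcal{B}}/(W_\mathcal{B}+W_{\neg\mathcal{B}})^2$. The chain rule with factor $-1$ then gives exactly the displayed expression.

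The third step is the sign. The denominator is a positive square, since $\eta>1\geq\mathcal{F}_j$ makes every summand positive; this is the same fact that underlies Theorem~\ref{thm:dp}. For the numerator, $W_{\neg\mathcal{B}}$ is a sum over the complement of $\mathcal{B}$ of terms $\eta-\mathcal{F}_k>0$. It is therefore strictly positive as soon as the complement is nonempty. Assumption~(A1), $|\mathcal{B}|<K/2$, guarantees this, since it leaves at least $K-|\mathcal{B}|>K/2>0$ honest clients.

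I expect this to be the only genuinely non-routine point: the argument needs the complement to be nonempty, not the full strength of a majority. I would remark that (A1) is used only in that weaker form. If $\mathcal{B}$ were all clients, the total weight would be identically $1$, which is the coalition analogue of the $K=1$ caveat in Theorem~\ref{thm:mwr}.

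The last step is the statement about the coalition's maximum. Strict negativity of every partial derivative on the box $[0,1]^{|\mathcal{B}|}$ means the total weight strictly increases whenever any member's score is lowered. Hence the unique maximizer over the box is $\mathcal{F}_k=0$ for all $k\in\mathcal{B}$; an alternative is to note that $W_\mathcal{B}/(W_\mathcal{B}+W_{\neg\mathcal{B}})$ is increasing in $W_\mathcal{B}$, which is maximized exactly there. At that point every term $\omega_k\mathcal{F}_k$ in Eq.~(\ref{eq:goal}) vanishes, so the attack has zero bias impact, as in Corollary~\ref{cor:opt}.
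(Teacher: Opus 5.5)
Your proposal is correct and follows the paper's proof essentially step for step: write the coalition's total weight as $W_\mathcal{B}/(W_\mathcal{B}+W_{\neg\mathcal{B}})$, apply the quotient rule, and get the sign from $W_{\neg\mathcal{B}}\geq(K-|\mathcal{B}|)(\eta-1)>0$, which gives the maximizer at $\mathcal{F}_k=0$ for all $k\in\mathcal{B}$. Your remark that (A1) is needed only to make the complement of $\mathcal{B}$ nonempty matches the paper's own note that the argument uses nothing beyond $W_{\neg\mathcal{B}}>0$.
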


\begin{theorem}[Demographic Participation (DP)]\label{thm:dp}
Under \fairis{} with $\eta>1$, $\omega_k>0$ for all clients $k$ and all fairness scores $\Fk\in[0,1]$.
\end{theorem}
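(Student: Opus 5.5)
The plan is to argue directly from the closed-form weight $\omega_k=\wbar_k/\sum_j\wbar_j$ with $\wbar_k=\eta-\Fk$, exactly as in the parenthetical remarks accompanying Algorithm~\ref{alg:fairis}. The proof splits into two sign checks: positivity of every numerator and positivity of the common denominator. The ratio of two strictly positive reals is then strictly positive.

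First I will fix an arbitrary client $k$ and an arbitrary score vector $(\mathcal{F}_1,\dots,\mathcal{F}_K)\in[0,1]^K$. Since $\Fk\le 1<\eta$, we get $\wbar_k=\eta-\Fk\ge\eta-1>0$. This lower bound $\eta-1$ is uniform in $k$, in the round $t$, and in the scores. It applies to every index $j$, not only to $k$.

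Second, summing these bounds over $j$ gives $\sum_j\wbar_j\ge K(\eta-1)>0$, using $K\ge1$ (indeed $K\ge2$ by the system model). So the normalization in Algorithm~\ref{alg:fairis} is well defined, and $\omega_k=\wbar_k/\sum_j\wbar_j>0$. As a by-product we get the quantitative floor $\omega_k\ge(\eta-1)/(K\eta)$: the numerator is at least $\eta-1$, and the denominator is at most $K\eta$ because each $\mathcal{F}_j\ge0$. This floor makes the statement robust and feeds directly into the influence bound of Proposition~\ref{prop:influence}.

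There is no real obstacle; the only step needing care is the role of the hypothesis $\eta>1$. I would close by noting that it is sharp. If $\eta\le1$, a client reporting $\Fk=\eta$ (admissible when $\eta\in(0,1]$) receives $\wbar_k=0$ and is excluded outright. This also explains why the data-driven choice in Eq.~(\ref{eq:eta}) clamps $\eta\ge1+10^{-3}$. Finally, since the argument uses no property of the round beyond $\Fk\in[0,1]$, it holds for all rounds simultaneously, which gives Property~\ref{prop:dp} in full.
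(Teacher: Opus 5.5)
Your proof is correct and is the same as the paper's: $\wbar_j=\eta-\mathcal{F}_j\ge\eta-1>0$ for every $j$, so $\omega_k$ is a ratio of strictly positive quantities. Your extra floor $\omega_k\ge(\eta-1)/(K\eta)$ is also valid, though Proposition~\ref{prop:influence} uses only the positivity $\omega_0>0$, not this lower bound.
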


MWR constrains an adversary's aggregation \emph{weight}, but weight alone does not constrain its effect on the global model: in $\theta\leftarrow\theta+\sum_k\omega_k\Delta_k$ an adversary can offset a small $\omega_0$ by inflating $\lVert\Delta_0\rVert$. The norm clipping in Algorithm~\ref{alg:fairis} closes exactly this gap, and turns MWR into a bound on displacement of the global model.

\begin{proposition}[Influence containment under clipping]\label{prop:influence}
Let the server clip every client update to $\lVert\Delta_k\rVert_2\le C$ before aggregation, as in Algorithm~\ref{alg:fairis}. Then the adversary $C_0$'s contribution to the global update satisfies
\[
\bigl\lVert\omega_0\Delta_0\bigr\rVert_2 \;\le\; \omega_0\,C \;=\; \frac{C\,(\eta-\mathcal{F}_0)}{\sum_j(\eta-\mathcal{F}_j)},
\]
which by Theorem~\ref{thm:mwr} is strictly decreasing in $\mathcal{F}_0$. Increasing injected group disparity therefore strictly reduces the adversary's maximum achievable displacement of the global model, uniformly over its choice of $\Delta_0$.
\end{proposition}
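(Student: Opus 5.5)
The plan is to combine the post-clipping norm bound with absolute homogeneity of the norm, and then appeal to Theorem~\ref{thm:mwr} for monotonicity. First I would note that after the clipping step of Algorithm~\ref{alg:fairis} the adversary's transmitted update is replaced by $\Delta_0\cdot\min\bigl(1,C/\lVert\Delta_0\rVert_2\bigr)$. I would check the two cases of the minimum. If $\lVert\Delta_0\rVert_2\le C$, the update is unchanged and its norm is at most $C$. Otherwise its norm is exactly $(C/\lVert\Delta_0\rVert_2)\lVert\Delta_0\rVert_2=C$. (The degenerate case $\Delta_0=0$ is read with the convention $\min(1,\infty)=1$ and trivially satisfies the bound.) So the clipped update always obeys $\lVert\Delta_0\rVert_2\le C$, and this holds for \emph{every} choice of $\Delta_0$ the adversary makes. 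That uniformity is what the final clause of the proposition needs.

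Next I would use that the weight $\omega_0$ is a nonnegative scalar. In fact it is strictly positive by Theorem~\ref{thm:dp}, although nonnegativity is all this step needs. Absolute homogeneity then gives $\lVert\omega_0\Delta_0\rVert_2=\omega_0\lVert\Delta_0\rVert_2\le\omega_0C$. Substituting the closed form $\omega_0=(\eta-\mathcal{F}_0)/\sum_j(\eta-\mathcal{F}_j)$ yields the displayed equality. One point to make explicit is that $\omega_0$ depends only on the reported scores and never on $\Delta_0$; the weighting and clipping steps of Algorithm~\ref{alg:fairis} are decoupled. Under (A2) the adversary therefore cannot trade update magnitude for weight, and the bound is uniform over $\Delta_0$.

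Finally, I would hold the other clients' scores fixed, exactly as in Theorem~\ref{thm:mwr}. That theorem says $\mathcal{F}_0\mapsto\omega_0$ is strictly decreasing for $K\ge2$ (which holds, indeed $K\ge3$ under (A1)). Multiplying by the constant $C>0$ preserves strict monotonicity, so $\omega_0C$ is strictly decreasing in $\mathcal{F}_0$. If a self-contained check is wanted, the derivative is $-C\sum_{j\ne0}(\eta-\mathcal{F}_j)/\bigl(\sum_j(\eta-\mathcal{F}_j)\bigr)^2<0$, with positivity of each $\eta-\mathcal{F}_j$ coming from $\eta>1$.

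There is no real technical obstacle. The only step needing care is stating precisely in what sense the bound is monotone. It is the \emph{upper bound} on displacement, the worst case over $\Delta_0$, that decreases strictly; the realized displacement for a particular $\Delta_0$ need not. I would also note that the bound concerns only the adversary's own summand, not the interaction with honest updates.
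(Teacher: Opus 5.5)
Your proposal is correct and follows essentially the same route as the paper's proof: the clipping bound $\lVert\Delta_0\rVert_2\le C$, absolute homogeneity with $\omega_0>0$ from Theorem~\ref{thm:dp}, substitution of the closed form, and strict monotonicity from Theorem~\ref{thm:mwr}, which multiplication by $C>0$ preserves. Your extra checks are sound and make explicit what the paper leaves implicit: the two cases of the clipping step, the fact that $\omega_0$ does not depend on $\Delta_0$, the explicit derivative, and the remark that only the worst-case bound (not the realized displacement) is monotone.
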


\begin{proof}
Clipping gives $\lVert\Delta_0\rVert_2\le C$, so $\lVert\omega_0\Delta_0\rVert_2=\omega_0\lVert\Delta_0\rVert_2\le\omega_0C$ since $\omega_0>0$ by Theorem~\ref{thm:dp}. Substituting the closed form $\omega_0=(\eta-\mathcal{F}_0)/\sum_j(\eta-\mathcal{F}_j)$ gives the stated expression, and strict monotonicity in $\mathcal{F}_0$ is Theorem~\ref{thm:mwr}; multiplying by the constant $C>0$ preserves it.
\end{proof}

Without clipping no such bound exists, since $\lVert\omega_0\Delta_0\rVert_2$ is unbounded above for any fixed $\omega_0>0$. Clipping is therefore not an optional implementation detail but a precondition for the containment claim. Empirically it is inexpensive: calibrated per configuration to the $90$th percentile of honest update norms measured in an adversary-free pilot, the bound $C$ lies between $2.83$ and $5.41$ and the clip binds on $9.0$ to $30.6\,\%$ of honest updates across aggregation rules (median $12.5\,\%$), so honest training is only mildly constrained (Section~\ref{sec:experiments}).

\begin{theorem}[Non-Gamesmanship (NG)]\label{thm:ng}
Under \fairis{}, no $\PPT$ strategy simultaneously maintains $\Fk>0$ and achieves the aggregation weight of a perfectly fair ($\Fk=0$) client. Under \fairfed{}, a $\PPT$ adversary who observes $\Fg$ and matches it ($\Fk=\Fg>0$) holds a zero fairness gap while maintaining positive group disparity; under the unclipped gap update with budget $\beta>0$, positive total weight, and a positive average gap $\bar\Delta>0$, the adversary's normalized weight does not decrease but strictly increases. The zero-flooring used in practice does not rescue the rule: the floor can bind only on clients whose weight is being driven down, never on the matched-gap adversary whose weight is increasing (Appendix~\ref{app:proofs}).
\end{theorem}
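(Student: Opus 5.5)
The plan is to prove the two halves of Theorem~\ref{thm:ng} separately. The \fairis{} half is a direct comparison, and the \fairfed{} half is a one-step analysis of the gap update at the matched point.

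\emph{\fairis{} part.} Fix client $k$ and the other scores, and write $S_{-k}=\sum_{j\ne k}(\eta-\mathcal{F}_j)$. Since $K\geq2$ and $\eta>1$, each summand is positive, so $S_{-k}>0$ (this is the positivity behind Theorem~\ref{thm:dp}). We compare the weight at any $\Fk\in(0,1]$ with the weight at $\Fk=0$. We cannot simply cite Theorem~\ref{thm:mwr}, because strict monotonicity alone gives the strict inequality for every fixed $\Fk>0$ but leaves open whether a ``strategy'' could alter the other scores. We therefore state explicitly that NG is read with $\{\mathcal{F}_j\}_{j\ne k}$ held fixed, as in the adversary model of Section~\ref{sec:threat}. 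Cross-multiplying the two positive fractions gives
\[
\frac{\eta-\Fk}{S_{-k}+\eta-\Fk}-\frac{\eta}{S_{-k}+\eta}=\frac{-\Fk\,S_{-k}}{(S_{-k}+\eta-\Fk)(S_{-k}+\eta)}<0 .
\]
Hence no choice of $\theta_k$ yielding $\Fk>0$ attains the fair-client weight. The $\PPT$ qualifier is then immaterial, since the bound is information-theoretic. We can also note that the gap is quantitative, namely proportional to $\Fk S_{-k}$.

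\emph{\fairfed{} part.} I would recall the \fairfed{} rule. The gap is $\Delta_k=|\Fk-\Fg|$ and $\bar\Delta=\frac1K\sum_j\Delta_j$. The weight update is $w_k'=w_k-\beta(\Delta_k-\bar\Delta)$, followed by normalization. If the adversary matches $\mathcal{F}_0=\Fg>0$, then $\Delta_0=0$ while its disparity stays positive. Its unnormalized update is therefore $w_0'=w_0+\beta\bar\Delta>w_0$ when $\beta,\bar\Delta>0$. The increments sum to $-\beta\sum_k(\Delta_k-\bar\Delta)=0$, so the total $S=\sum_k w_k$ is preserved. With $S>0$, the normalized weight rises from $w_0/S$ to $(w_0+\beta\bar\Delta)/S$. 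I would also point out why matching is feasible. $\Fg$ is a size-weighted mean that includes $\mathcal{F}_0$ itself, so the exact fixed point is $\mathcal{F}_0=$ the honest clients' size-weighted mean. The adversary can compute this from the broadcast statistic and iterate, with a geometric convergence of the matching error. This mirrors the fixed-point statement cited as Lean-checked.

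\emph{Zero-flooring, the expected main obstacle.} With the floor $w_k''=\max(0,w_k')$, the adversary's entry is untouched because $w_0'>w_0\ge0$. However, flooring can enlarge the total, $S''=S+\sum_{j\in Z}(-w_j')$ with $Z$ the floored set. This could in principle dilute the adversary's normalized share, so the claim that the floor ``does not rescue'' the rule needs an argument. First, observe that every $j\in Z$ has $\Delta_j>\bar\Delta$, so the floor binds only on clients whose weight is being driven down. Then bound the added mass by $\sum_{j\in Z}\bigl(\beta(\Delta_j-\bar\Delta)-w_j\bigr)$. I would try to show that
\[
\frac{w_0+\beta\bar\Delta}{S''}\ \ge\ \frac{w_0}{S}
\]
by cross-multiplying. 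This reduces to $\beta\bar\Delta\, S\ge w_0\sum_{j\in Z}(-w_j')$, which I would attack using $-w_j'\le\beta(\Delta_j-\bar\Delta)$ and $w_0\le S$. If only non-strict monotonicity survives in general, I would state the flooring claim in that weaker form, as ``does not decrease'', and keep strict increase for the unclipped update.
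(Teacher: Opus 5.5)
Your \fairis{} half (direct cross-multiplication with $\{\mathcal{F}_j\}_{j\ne k}$ held fixed) is the paper's argument, and so is your unclipped \fairfed{} half (matched gap $\Delta_0=0$, increment $\beta\bar\Delta$, conserved total mass $S>0$). For zero-flooring, the paper establishes only what you have already shown: a floored client has $\Delta_j>\bar\Delta$ because $w_j\ge 0$, so the floor never touches the adversary. Beyond that the paper only iterates one numerical example ($K=3$, honest scores $0.10$ and $0.85$).

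The step that fails is your attempt to go further and prove $(w_0+\beta\bar\Delta)/S''\ge w_0/S$. This inequality is false in general, in both its strict and non-strict forms. Take $K=4$ equal-size clients, $\beta=1$, and scores $(\mathcal{F}_0,\mathcal{F}_1,\mathcal{F}_2,\mathcal{F}_3)=(0.3,0.6,0.05,0.25)$. Then $\Fg=0.3=\mathcal{F}_0$ exactly, the gaps are $(0,0.3,0.25,0.05)$, and $\bar\Delta=0.15$. Use prior weights $(0.9,0.01,0.01,0.08)$. The unclipped update gives $(1.05,-0.14,-0.09,0.18)$, still with total $1$. Flooring gives $(1.05,0,0,0.18)$, and the adversary's normalized share falls from $0.9$ to $1.05/1.23\approx 0.854$.

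The reason your two bounds cannot close is structural. With $-w_j'\le\beta(\Delta_j-\bar\Delta)$ and $w_0\le S$, you would need $\sum_{j\in Z}(\Delta_j-\bar\Delta)\le\bar\Delta$. But $\sum_j(\Delta_j-\bar\Delta)=0$ and $\Delta_0=0$, so the total excess of the above-average clients equals $\bar\Delta+\sum_{j\ne 0,\,\Delta_j<\bar\Delta}(\bar\Delta-\Delta_j)$. Any honest client with a below-average gap gains mass that inflates the denominator, while floored clients stop subtracting from it once they reach zero.

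So abandon the fallback ``does not decrease'' as well. State the flooring clause in the qualitative form the theorem actually asserts: the adversary is never floored, and its unnormalized weight still rises by $\beta\bar\Delta$. If you want a share-monotonicity statement, restrict it to the case where no honest client has $\Delta_j<\bar\Delta$; there your bounds do give the non-strict inequality.
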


\paragraph{Scope of the formal guarantees.} Theorems~\ref{thm:mwr}--\ref{thm:ng} and Corollary~\ref{cor:coalition} bound adversarial aggregation \emph{weight}; they make no claim about the absolute EOD of honest clients (which additionally requires honest clients to carry sufficient fairness signal, as shown in Section~\ref{sec:experiments}). Assumption (A2) is a trust assumption on the score channel; authentication or MPC is required in fully adversarial deployments (Section~\ref{sec:limitations}). The name \emph{Non-Gamesmanship} refers specifically to the impossibility of recovering a perfectly fair client's weight while remaining biased (Theorem~\ref{thm:ng}); it is not a claim that no manipulation of any kind is possible. The \fairfed{} clause exhibits one concrete gaming strategy, matching the global score, that \fairis{} removes.

\textbf{Empirical security evaluation.} Under active fairness poisoning at $\alpha_\text{adv}=2$, \fedavg{} gives the attacker its full data-proportional share regardless of injected bias ($0.395$, $0.210$, $0.303$ on German, Taiwan, Adult), and Uniform fixes it at $1/K=0.333$ on all three, confirming that neither rule reacts to the fairness signal. \fairfed{} ($\beta=1$) drives it to exactly $0$ on Taiwan ($0.380$, $0.000$, $0.180$), removing that client's subgroup and violating DP; its gap rule is not lower-bounded, so positivity is never guaranteed. \fairis{} ($\eta=1.01$) holds the attacker to $0.005$, $0.048$, $0.183$, the lowest weight of any rule on German and Taiwan and within $0.003$ of clipped \fairfed{} on Adult ($0.180$), while alone guaranteeing strict positivity (Theorem~\ref{thm:dp}) and a monotone influence bound (Proposition~\ref{prop:influence}); at $\eta=1.32$ it is $0.108$, $0.154$, $0.230$. The comparison at a \emph{stealthy} attack strength, which is the regime the threat model actually specifies, is deferred to Section~\ref{sec:sweep}. Figure~\ref{fig:mwr} (Appendix) plots these values and Table~\ref{tab:attack} lists them.
\section{Experiments}\label{sec:experiments}
We evaluate \fairis{} on three financial datasets under both routine non-IID heterogeneity and active fairness poisoning, comparing it against Silo, \fedavg{}, and \fairfed{}. We first describe the setup, then report the standard non-IID and active-poisoning results.

\subsection{Setup}
This subsection describes the datasets, the partitioning and evaluation protocol, the compared methods, and the poisoning configuration.

\textbf{Datasets.} We evaluate on three datasets of differing size and fairness profile. \emph{German Credit} (Statlog)~\cite{hofmann1994}: 1,000 records, 20 features; we group on the personal-status-and-sex attribute, taking single or divorced/separated males (codes A91 and A93~\cite{hofmann1994}) as unprivileged and all others as privileged (common alternatives group by sex or by age); its small size stresses the cross-silo setting. \emph{Taiwan Credit} (Default of Credit Card Clients)~\cite{yeh2016}: 30,000 records, 23 features, sensitive attribute sex, with a pronounced 22\,\% default-rate imbalance. \emph{Adult Income}~\cite{becker1996adult}: 30,162 complete records, 13 features, sensitive attribute sex; the primary benchmark of \fairfed{}~\cite{ezzeldin2023} and EAB-FL~\cite{meerza2024eabfl}.

\textbf{Partitioning and evaluation protocol.} We partition each dataset across $K$ clients with $\mathrm{Dir}(\alpha=0.5)$ over the four (label, sensitive attribute) cells, so both label balance and group composition vary across clients. A stratified 20\,\% global test pool is held out before partitioning and used to evaluate each client's model, ensuring fairness metrics remain well-defined for small local sets. Metrics are mean $\pm$ standard deviation (std) over non-degenerate client-run combinations across three seeds; degenerate (single-class) predictors are excluded, at a rate that is method-independent.

\textbf{Methods.} We compare four methods (six configurations): \textbf{Silo} (no collaboration; utility lower bound), \fedavg{}~\cite{mcmahan2017} (dataset-size-weighted, no fairness mechanism), \fairfed{}~\cite{ezzeldin2023} with $\beta\in\{0,1\}$ ($\beta=0$ recovers \fedavg{}, $\beta=1$ is the maximum fairness budget), and \fairis{} with $\eta\in\{1.01,1.32\}$ (high and moderate fairness emphasis; the attacker keeps $0.5\,\%$ to $16.5\,\%$ and $10.8\,\%$ to $21.7\,\%$ of the weight respectively). The value $\eta=1.01$ lies in the recommended range $(1,(K{+}1)/K]$ for all $K$; $\eta=1.32$ is in range only for $K=3$ and is retained across $K$ for a consistent comparison (Section~\ref{sec:fairis}). GLocalFair~\cite{meerza2024glocalfair} and SFFL~\cite{zhang2025sffl} are discussed but not re-implemented (Section~\ref{sec:limitations}). All clients train identical 4-layer MLPs (input$\to$64$\to$32$\to$16$\to$1, sigmoid) with Adam (learning rate $5\times10^{-3}$, up to 60 steps per round, 16 rounds) and class-balanced cross-entropy; SplitML on the shared top 2 layers (input side), FL all 4 layers. Seeds are 42, 123, 456; we report mean$\pm$std in tables and standard error in figures, treating sub-variability differences as indicative.

\textbf{Active fairness poisoning.} To simulate the EAB-FL and Kasyap threat model, client $C_0$ trains with the adversarial objective $\mathcal{L}_\text{total} = \mathcal{L}_\text{CE} - \alpha_\text{adv}|\bar{p}_{A=1}-\bar{p}_{A=0}|$ with $\alpha_\text{adv}=2$, maximizing the difference in predicted positive-class probabilities between demographic groups while maintaining overall prediction accuracy. The remaining clients $C_1,\ldots,C_{K-1}$ train normally on balanced data.

\subsection{Standard non-IID results}\label{sec:standard}
Table~\ref{tab:main} reports EOD and Statistical Parity Difference (SPD) at $K=10$ for each dataset, with the full per-$K$ EOD breakdown in Figure~\ref{fig:eod} (Appendix~\ref{app:eod_fig}). Figure~\ref{fig:fl_vs_splitml} (Appendix) reports the key architecture-agnostic comparison: \fairis{} under full-model FL aggregation (hatched bars) versus SplitML partial-model aggregation (solid bars).

\textbf{SplitML results.} On \textbf{German Credit} at $K=10$, \fairis{} ($\eta=1.01$) attains the lowest observed mean EOD ($0.112\pm0.109$) and SPD ($0.129$), below \fedavg{} ($0.144$). The Uniform ablation shows this is \emph{not} evidence of fairness differentiation: Uniform attains bitwise identical values, because on this small dataset every client interpolates its local split, the score channel saturates to zero, \fairfed{} collapses exactly onto \fedavg{}, and \fairis{} degenerates to uniform averaging. The entire German advantage over \fedavg{} is therefore attributable to discarding size-proportional weighting, not to down-weighting biased clients. On \textbf{Taiwan Credit} at $K=10$, \fairis{} does not lead: \fairfed{} ($\beta{=}1$) attains the lowest mean EOD ($0.382$), Silo follows ($0.401$), and \fairis{} ($\eta=1.01$) is higher ($0.467$), above both \fedavg{} ($0.426$) and Uniform ($0.445$). Silo's lead here illustrates that collaboration itself can import bias from other clients' data; the \fairis{} guarantee is containment of adversarial influence (Remark~\ref{rem:scope}), not a universal fairness improvement over local training. On \textbf{Adult Income} at $K=10$, \fairfed{} ($\beta=1$) attains the lowest mean EOD ($0.173$) and SPD ($0.167$), with \fairis{} above it ($0.251$ and $0.254$) and also above Uniform ($0.258$ EOD) only marginally. Standard deviations are wide across all three datasets (imbalanced, small per-client splits), and no single method dominates the routine non-IID setting; the method's distinctive behavior instead appears under active poisoning (Section~\ref{sec:security}), where it alone contains the adversary's aggregation weight.

\textbf{FL full-model results (architecture-agnostic validation).} We apply the same aggregation rule unchanged to homogeneous full-model FL, where all parameters are aggregated (Table~\ref{tab:fl}). Here the picture differs from SplitML: on German and Taiwan Credit the local score channel largely collapses under full-model averaging, so \fedavg{} (equivalently \fairfed{} $\beta{=}0$) attains the lowest observed mean FL EOD at $K=5$ ($0.098$ on German, $0.161$ on Taiwan) and \fairis{} sits above it ($0.252$ and $0.231$ for $\eta=1.01$), matching Uniform exactly on German for the saturation reason above; on \textbf{Adult Income}, \fairfed{} $\beta{=}1$ attains the lowest mean FL EOD ($0.150$) and \fairis{} ($\eta=1.01$) reaches $0.184$, both improving on \fedavg{} ($0.209$) at comparable accuracy ($0.783$ vs.\ $0.804$). \fairis{} thus applies without modification across both regimes and never collapses the aggregation, keeping every client's weight strictly positive. The point of this experiment is architecture-agnostic \emph{transfer}: the identical closed-form weighting runs on both SplitML and full-model FL without any change (Figure~\ref{fig:fl_vs_splitml}).

A round-by-round comparison of global-model fairness across communication rounds is given in Appendix~\ref{app:convergence}.

\subsection{Active fairness poisoning}
Table~\ref{tab:attack} reports the security evaluation under active fairness poisoning ($K=3$, client $C_0$ trains adversarially, clients $C_1$ and $C_2$ train normally, 2 shared layers, mean over 3 seeds). The table includes all \fairfed{} and \fairis{} configurations to show the security-inclusiveness trade-off.

The adversarial client reaches an EOD of roughly $0.39$ to $0.86$ across aggregation rules. Accuracy above chance is not by itself evidence of stealth, so we measure the utility constraint of Section~\ref{sec:threat} directly in Section~\ref{sec:sweep} rather than assuming it. This probes the high-bias end of the curve; the full monotone relationship across $\mathcal{F}_k$ is established in Theorem~\ref{thm:mwr} and Figure~\ref{fig:weight}.

The attacker-weight reductions (Section~\ref{sec:security}, Table~\ref{tab:attack}) translate into honest-client protection only where the data carries a fairness signal. On German Credit, \fairis{} ($\eta=1.01$) lowers fair-client EOD to $0.364$, below \fedavg{} ($0.390$); \fairfed{} ($\beta=1$) reaches a marginally lower $0.333$ but only by heavily down-weighting the attacker toward removal. On Taiwan Credit, structural class imbalance keeps every method's fair-client EOD in the $0.41$ to $0.54$ range (Remark~\ref{rem:scope}), so the benefit appears as the $77\,\%$ attacker-weight reduction ($0.210\to0.048$), not in EOD. On Adult Income, fair-client EOD stays elevated for every method and \fairis{} confers no honest-client advantage, which Section~\ref{sec:sweep} explains: on that dataset the honest clients are themselves highly unfair, so the adversary is not a score outlier and the mechanism has nothing to act on.

\subsection{Attack strength, stealth, and the scope of containment}\label{sec:sweep}
The threat model of Section~\ref{sec:threat} requires the adversary to stay within $\epsilon$ accuracy of a benign baseline; this is what makes fairness poisoning invisible to accuracy-based Byzantine defenses, and it is the premise the attack inherits from EAB-FL~\cite{meerza2024eabfl} and Kasyap et al.~\cite{kasyap2025}. Rather than assume it, we measure it. We sweep the attack strength $\alpha_\text{adv}\in\{0.25,0.5,1,2\}$ and record, at every setting, both the adversary's aggregation weight and its accuracy gap against an adversary-free control run on the same partition and seed (Table~\ref{tab:sweep}).

Two points of methodology. First, the correct control for \fairis{} is \emph{uniform} weighting, not \fedavg{}: \fairis{} and uniform weighting both ignore client size, whereas \fedavg{} does not, so measuring \fairis{} against \fedavg{} conflates fairness weighting with the removal of size proportionality. Second, we report the accuracy gap rather than raw accuracy, because only the gap speaks to detectability.

\textbf{A stealthy adversary is contained on Taiwan Credit.} At $\alpha_\text{adv}=0.5$ the adversary loses $0.006$ accuracy relative to control, which no accuracy-based defense could plausibly detect, while driving its model EOD to $0.550$. \fairis{} ($\eta=1.01$) nonetheless cuts its aggregation weight to $0.198$, a $41\,\%$ reduction below the size-blind control ($1/K=0.333$); at $\alpha_\text{adv}=1$ the reduction is $54\,\%$ at a $0.038$ gap. This is the paper's central empirical claim, and on this dataset it holds in the regime that matters: the adversary is invisible to accuracy-based defenses and is still caught by the fairness channel.

\textbf{Containment fails when the honest population is already unfair.} On Adult Income at $\alpha_\text{adv}\le1$, \fairis{} assigns the adversary \emph{more} weight than uniform ($0.343$, $0.335$, $0.336$ against $0.333$). This is not noise. By the decomposition of Eq.~(\ref{eq:limit}), $\omega_k = 1/K + (\bar{\mathcal{F}}-\mathcal{F}_k)/(K\eta-\sum_j\mathcal{F}_j)$ with $K\eta-\sum_j\mathcal{F}_j>0$, so $\omega_0>1/K$ holds \emph{if and only if} $\mathcal{F}_0<\bar{\mathcal{F}}$. On Adult the adversary's reported local EOD sits below the honest clients' mean: the honest clients are themselves highly unfair locally, a mildly adversarial client does not stand out, and absolute-score weighting rewards it for looking comparatively fair. Only at $\alpha_\text{adv}=2$, where the adversary becomes a genuine outlier, does containment appear, and by then its accuracy gap is $0.201$ and it is detectable by other means.

\textbf{Scope condition.} Containment strength tracks how far the adversary's local score separates from the honest population's mean, not how adversarial it is in absolute terms. \fairis{} contains an adversary that is an outlier in the score distribution and does nothing when the honest population is uniformly unfair. German Credit at $\alpha_\text{adv}=0.25$ illustrates a third case: a weak attack barely raises disparity at all (EOD $0.306$, close to honest), so there is little to contain and the reduction is $2.8\,\%$.

\textbf{The design tension this exposes.} Absolute-score weighting is ungameable (Theorem~\ref{thm:ng}) precisely because it ignores the population distribution, and that same property is why it fails on Adult. Relative weighting, as in \fairfed{}'s gap rule, adapts to the population but is gameable for exactly the same reason. We state this trade-off explicitly rather than claim unconditional containment: a rule that is robust to a uniformly unfair population without reintroducing gameability is, in our view, the natural next problem.

\begin{table}[ht]
\centering
\caption{Attack-strength sweep ($K=3$, SplitML, mean over 3 seeds). Accuracy gap is $|\text{acc}_\text{atk}-\text{acc}_\text{benign}|$ against an adversary-free control on the same partition and seed; rows with a gap at or below $0.05$ are stealthy in the sense of Section~\ref{sec:threat} and are marked $\star$. Reduction is \fairis{} ($\eta{=}1.01$) against the size-blind control (uniform, $1/K=0.333$). Negative reduction means the adversary received \emph{more} weight than uniform, which by Eq.~(\ref{eq:limit}) happens exactly when its reported score falls below the honest mean.}
\label{tab:sweep}
\begin{tabular}{llccccc}
\toprule
Dataset & $\alpha_\text{adv}$ & Uniform $\omega_0$ & \fairis{} $\omega_0$ & Reduction & Acc.\ gap & Atk.\ EOD \\
\midrule
\multirow{4}{*}{German} & 0.25 & 0.333 & 0.324 & 2.8\,\% & 0.040$^\star$ & 0.306 \\
 & 0.5 & 0.333 & 0.224 & 32.8\,\% & 0.063 & 0.314 \\
 & 1 & 0.333 & 0.118 & 64.5\,\% & 0.070 & 0.428 \\
 & 2 & 0.333 & 0.005 & 98.5\,\% & 0.145 & 0.547 \\
\midrule
\multirow{4}{*}{Taiwan} & 0.25 & 0.333 & 0.268 & 19.6\,\% & 0.012$^\star$ & 0.527 \\
 & 0.5 & 0.333 & 0.198 & \textbf{40.6\,\%} & 0.005$^\star$ & 0.550 \\
 & 1 & 0.333 & 0.152 & \textbf{54.4\,\%} & 0.038$^\star$ & 0.567 \\
 & 2 & 0.333 & 0.048 & 85.5\,\% & 0.111 & 0.692 \\
\midrule
\multirow{4}{*}{Adult} & 0.25 & 0.333 & 0.343 & $-$2.8\,\% & 0.057 & 0.531 \\
 & 0.5 & 0.333 & 0.335 & $-$0.6\,\% & 0.183 & 0.626 \\
 & 1 & 0.333 & 0.336 & $-$0.8\,\% & 0.238 & 0.440 \\
 & 2 & 0.333 & 0.183 & 45.0\,\% & 0.201 & 0.387 \\
\bottomrule
\end{tabular}
\end{table}

\subsection{The gamesmanship attack}\label{sec:gaming}
The preceding experiments use a disparity-maximizing adversary, which is the regime in which \fairfed{} performs comparatively well. Theorem~\ref{thm:ng} predicts a different and more damaging attack against gap-based weighting, and this section tests it directly.

\textbf{Attacker model.} The adversary does not maximize disparity. It \emph{steers} its local disparity toward the aggregate fairness statistic $\Fg$ broadcast in the previous round, minimizing $\mathcal{L}_\text{CE}+\lambda\bigl|\,\lvert\bar p_{A=1}-\bar p_{A=0}\rvert-\Fg\,\bigr|$ with $\lambda=4$ and a round-0 target of $0.3$ before any global statistic exists. Crucially the adversary still reports its \emph{honestly computed} local EOD, so assumption~(A2) is respected throughout: it manipulates its training, never its reporting. Under \fairfed{} this drives its fairness gap $\Delta_0=|\Fg-\mathcal{F}_0|$ toward zero while its disparity remains positive, which is exactly the configuration for which the gap update $\wbar_k\leftarrow\wbar_k-\beta(\Delta_k-\bar\Delta)$ \emph{increases} the adversary's weight. Under \fairis{} and \fairisn{} the same behavior simply registers as a positive absolute score and is penalized by MWR, because neither rule references $\Fg$.

\textbf{Protocol.} $K=3$ clients, SplitML mode with 2 shared layers, clipping enabled, three seeds, $16$ rounds, with client $C_0$ adversarial.

\textbf{The attack has a fixed point, and it bounds its own strength.} Because the aggregate statistic $\Fg=\sum_k(n_k/n)\mathcal{F}_k$ includes the adversary's own score, the matched-gap condition $\mathcal{F}_0=\Fg$ is self-referential. Solving it,
\begin{equation}\label{eq:gamefix}
\mathcal{F}_0^{\ast}\;=\;\frac{\sum_{k\ne0}n_k\mathcal{F}_k}{n-n_0},
\end{equation}
the sample-weighted mean of the \emph{honest} clients' scores. Two consequences follow, and both are new. First, an adversary that repeatedly matches the previously broadcast statistic converges to Eq.~(\ref{eq:gamefix}) geometrically at rate $n_0/n<1$, so the attack is self-stabilizing rather than divergent in the score variable. Second, and more importantly, the disparity the adversary may retain while holding a zero gap is capped by the honest population's own disparity. In the limiting case where every honest client is perfectly fair, $\mathcal{F}_0^{\ast}=0$: the only way to hold a zero gap is to become perfectly fair, which is to abandon the attack. \fairfed{}'s gameability is therefore not unconditional but bounded by how unfair the honest population already is, which is the opposite of the regime in which one would deploy a fairness-aware aggregator.

\textbf{Results: the vulnerability is real but self-limiting.} The adversary must steer the statistic it is \emph{scored} on, namely the positive-label TPR gap; steering the all-sample probability gap optimizes a different quantity and never attains the condition. With the correct surrogate, the matched-gap condition is achieved on German Credit, where the adversary's gap falls from $0.271$ at round~0 to $0.0012$ averaged over the final four rounds, against $0.0002$ for honest clients. There \fairfed{} does reward it exactly as Theorem~\ref{thm:ng} predicts: the adversary's weight rises from $0.342$ to $0.349$, above the equal share of $1/K=0.333$, leaving each honest client $0.325$.

The reward, however, is nearly worthless, and Eq.~(\ref{eq:gamefix}) says why. German's honest clients report $\mathcal{F}_k=0$, so the fixed point is $\mathcal{F}_0^{\ast}=0$ and the adversary converges to $\mathcal{F}_0\approx0$: it buys $+0.016$ of aggregation weight by surrendering almost all of its bias, retaining an EOD of $0.145$ against the $0.696$ it reaches under the disparity-maximizing objective, or $21\,\%$ of its achievable disparity. On Taiwan and Adult, where honest clients are themselves unfair, the adversary retains far more disparity ($65\,\%$ and $59\,\%$) but never attains the matched-gap condition within $16$ rounds (final gaps $0.189$ and $0.102$), so \fairfed{} is not exploited there either.

The honest conclusion is therefore neither that \fairfed{} is broken nor that Theorem~\ref{thm:ng} is vacuous. The gamesmanship mechanism is empirically real, and it is bounded by Eq.~(\ref{eq:gamefix}): an adversary can convert a zero fairness gap into extra weight only in proportion to the disparity the honest population already exhibits, and in the fair-population regime that conversion rate goes to zero. \fairis{} is structurally immune because it never references $\Fg$, but under this particular attack that immunity buys little, since the attack is weak against every rule. Where \fairis{} does register a change is instructive rather than adverse: on German its adversary weight \emph{rises} from $0.218$ to $0.333$, because the adversary has genuinely become fair ($\mathcal{F}_0\approx0$) and Monotone Weight Reduction rewards low reported bias by construction (Corollary~\ref{cor:opt}). An adversary that games \fairis{} has stopped attacking. We record the adversary's aggregation weight in every round rather than only at convergence, since the predicted failure is a divergence over time rather than a single-round effect.

\section{Discussion}\label{sec:discussion}
We now turn from what \fairis{} guarantees to how to use it in practice: how to set the inclusiveness parameter $\eta$ for a given threat environment, and the limitations that bound the scheme's claims.

\subsection{Selecting $\eta$ and deployment considerations}
The choice of $\eta$ trades security against inclusiveness: $\eta=1.01$ sharply penalizes biased clients and limits the attacker to between $0.005$ and $0.165$ of the weight (Table~\ref{tab:attack}), while a larger $\eta$ such as $1.32$ preserves more equal participation when the concern is unintentional bias; Eq.~(\ref{eq:eta}) gives a default. \fairis{} is most effective when fairness varies across clients; when all are equally fair or biased the weights converge to the uniform share $1/K$, so it cannot manufacture fairness absent from every client and is not a substitute for local debiasing. The base rule also weights by fairness alone and ignores client data size $n_k$, so a data-light attacker is penalized no further than its score warrants; the size-weighted variant \fairisn{} of Section~\ref{sec:fairisn} combines the two signals, Proposition~\ref{prop:fairisn} shows every guarantee carries over, and Section~\ref{sec:standard} evaluates it. PrivFairFL~\cite{pentyala2022} targets group fairness under privacy constraints and is related but differs from \fairis{}'s provably ungameable closed-form EOD rule. FedGA~\cite{liu2025fedga} is adjacent rather than comparable: its Gini-coefficient criterion measures performance disparity \emph{across clients}, a client-level notion of fairness, not disparity across demographic groups, so it optimizes a different objective from the one studied here.

\subsection{Limitations}\label{sec:limitations}
\fairis{} has several limitations. (i)~The fairness score is sent in plaintext, so the server learns the relative bias ordering; secure computation of the weights is future work. (ii)~The properties assume honest score reporting. This is the sharpest limitation, and it is not discharged by authenticating the score channel: authentication establishes who sent a score, not that the score was computed correctly on the sender's data. Closing it requires either a proof of correct computation, such as a zero-knowledge proof over the score, or a verification protocol that resamples and checks reported scores; both are ongoing work. (iii)~\fairis{} weights on \emph{local} scores, and local fairness does not imply population fairness. Because EOD is a ratio of group- and label-conditional counts, it does not decompose into a client-size-weighted average of local EODs: two clients can each report $\Fk=0$ while the pooled model has EOD close to $1$, if their group-conditional positive rates are opposed and their group sizes are complementary. Every client can therefore look perfectly fair to the server while the population model is highly unfair, and no purely local-score rule, \fairis{} or \fairfed{}, can detect this. (iv)~The guarantees bound an adversary's \emph{influence}, not honest clients' \emph{downstream} EOD (Remark~\ref{rem:scope}): on imbalanced data such as Taiwan Credit, down-weighting the attacker does not by itself lower honest-client EOD. (v)~The empirical comparison covers \fedavg{}, a uniform-weighting ablation, and \fairfed{}; GLocalFair, SFFL, and FairTrade are discussed but not re-implemented, as their published settings differ. (vi)~The poisoning experiments use a single attacker; collusion ($|\mathcal{B}|<K/2$) is proved (Corollary~2) but not evaluated empirically. (vii)~Results use three seeds with wide deviations, and a formal convergence rate under round-varying weights remains open. Future work also includes larger-scale evaluation with more seeds and formal statistical analysis, broader fairness benchmarks such as COMPAS and folktables ACSIncome, metrics beyond EOD and SPD (equalized odds, worst-group accuracy), GLocalFair/SFFL baselines, and extension to multi-valued, intersectional, image, or text domains.

\section{Conclusion}\label{sec:conclusion}
We presented \fairis{}, a server-side aggregation-reweighting scheme that improves per-client group fairness under heterogeneous data while provably limiting the aggregation influence of high-bias participants, under honest (authenticated) fairness-score reporting, through the closed-form weight $\omega_k\propto\eta-\mathcal{F}_k$ (normalized across clients). This weight is monotone-decreasing in the attacker's bias (MWR), positive for every client (DP), and ungameable by tracking the global average (NG); MWR also extends to colluding minority coalitions (Corollary~\ref{cor:coalition}). Together these give a \emph{containment guarantee}: a more-biased adversary, alone or in coalition, always loses aggregation influence in proportion to its bias, while every client keeps positive weight, unlike \fairfed{}, whose gap rule can drive weights toward zero. This is a bound on \emph{adversarial influence}, not a promise of better downstream fairness in every regime, which additionally requires honest clients to carry sufficient fairness signal, as our three-dataset experiments show. Two distinct sets of conditions should therefore be kept separate. \emph{Influence containment} (the algebraic guarantee of Theorems~\ref{thm:mwr}--\ref{thm:ng} and Corollary~\ref{cor:coalition}) requires only the algebraic assumptions: finitely many clients with $K\geq2$, $\eta>1$, scores in $[0,1]$, honestly reported and authenticated scores, the other clients' scores held fixed for monotonicity, and a nonempty non-coalition set ($|\mathcal{B}|<K/2$) for the coalition bound. \emph{Downstream honest-client EOD improvement} is a strictly stronger, data-dependent property that additionally requires the honest clients to collectively carry sufficient fairness signal, as our three-dataset experiments show.

Future work includes secure multi-party computation of fairness scores (so the server learns only the aggregate weight and cannot infer individual clients' demographic compositions), formal characterization of the biased-client contamination threshold (the minimum fraction of fair clients required for collaboration to help rather than harm fairness), and extension to multi-valued and intersectional sensitive attributes.

\textbf{Ethical considerations.} This work uses three publicly available benchmark datasets containing no direct personal identifiers beyond the demographic attributes used in fairness evaluation. No human subjects were recruited. The aim of \fairis{} is to reduce, not enable, discriminatory outcomes in automated financial decision-making; it limits adversarial amplification of bias but does not remove bias that is present in every client's data (Section~\ref{sec:discussion}). The group-fairness metrics optimized here (EOD, SPD) do not capture all societal notions of equity; satisfying them does not guarantee the absence of harm, and deployment in high-stakes lending should involve broader validation across regulatory contexts.

\section*{Funding}
This work benefited from State aid managed by the Agence Nationale de la
Recherche (ANR) and was supported by the France ANR project
\mbox{ANR-22-CE39-0002}~EQUIHID.

\section*{Acknowledgments}
We thank Aymen Boudguiga for his contributions during the earlier development
of this work. He is not responsible for the present version.

\bibliographystyle{splncs04}
\bibliography{arxiv}
\appendix
\section{Background: Collaborative Learning and Group Fairness}\label{app:background}
This appendix provides expanded background on the collaborative learning architectures, group-fairness notions, privacy-preserving techniques, and Byzantine-robustness methods referenced throughout the paper. It is intended to make the paper self-contained for readers who are specialists in one of these areas but not all of them. The material here supports but is not required for the main results in Sections~\ref{sec:threat} through~\ref{sec:experiments}.

\subsection{The fairness-privacy-robustness landscape}
Three research threads intersect in this work, and Fairis sits at their conjunction. The first is \emph{group fairness}, which seeks to ensure that automated decisions do not systematically disadvantage demographic subgroups~\cite{dwork2012fairness,hardt2016equality,mehrabi2021survey}. The second is \emph{collaborative and federated learning}, which enables joint model training without centralizing raw data~\cite{mcmahan2017,kairouz2021advances}. The third is \emph{adversarial robustness}, which studies how malicious participants can subvert a distributed training process and how defenders can resist them~\cite{blanchard2017krum,meerza2024eabfl,kasyap2025}. Most prior work addresses one or two of these threads; the central observation motivating Fairis is that group fairness and adversarial robustness are not independent objectives in collaborative learning, because a fairness-aware aggregation rule is itself an attack surface that a rational adversary can exploit. A comprehensive recent survey of fairness in federated learning~\cite{mukhtiar2025survey} catalogs dozens of distinct works at the fairness-FL intersection but does not address the adversarial robustness of the fairness mechanism itself, which is the gap Fairis fills.

\subsection{Collaborative machine learning architectures}
Collaborative machine learning encompasses several related paradigms that keep raw data local while enabling joint model training. Understanding the distinctions between these architectures clarifies why \fairis{} is designed to be architecture-agnostic.

\textbf{Federated Learning (FL).}~\cite{mcmahan2017} In standard FL, each client $C_k$ holds a private dataset $\mathcal{D}_k$ and trains a local model $\mathcal{M}_k$ on that data. At each communication round, clients transmit their full model parameters (or gradients) to an aggregation server, which computes a weighted average (typically \fedavg{}~\cite{mcmahan2017}) and broadcasts the result. Raw data, sensitive attributes, and local training details never leave the client. The primary privacy concern is that full model parameters can leak membership and attribute information through model inversion and membership inference attacks~\cite{melis2019,shokri2017}.

\textbf{Split Learning (SL).}~\cite{gupta2018,vepakomma2018split} In Split Learning, the model is partitioned at a designated \emph{cut layer}: the client runs the forward pass up to the cut layer and transmits the intermediate activations (the \emph{smashed data}) to the server, which completes the forward and backward passes. Gradients are returned to the client up to the cut layer, and the client updates its portion of the model. Split Learning was originally motivated by distributed deep learning on sensitive health data, where raw patient records cannot leave the institution that holds them~\cite{vepakomma2018split}. It reduces the parameter surface exposed to the server relative to full-model FL but introduces smashed-data leakage risks, since the transmitted activations can, in principle, be inverted to recover input features.

\textbf{SplitML.}~\cite{trivedi2026splitml} SplitML combines federated and split learning: multiple clients train local models that share a common set of \emph{top layers} (closer to the input) while retaining private \emph{bottom layers} (closer to the output). Only the shared top-layer parameters are aggregated; the personalized bottom layers, including the classification head, never leave the client. Figure~\ref{fig:arch_app} illustrates the architecture. This design reduces the exposed parameter surface relative to full-model FL while supporting client heterogeneity through the personalized bottom layers. Because SplitML was the motivating architecture for the original \fairis{} design, we use it as the primary experimental testbed, and we additionally validate \fairis{} under full-model FL (Section~\ref{sec:experiments}); by design, the same aggregation rule extends to Split Learning, though we do not evaluate that setting empirically.

\begin{figure}[ht]
  \centering
  \includegraphics[width=0.99\columnwidth]{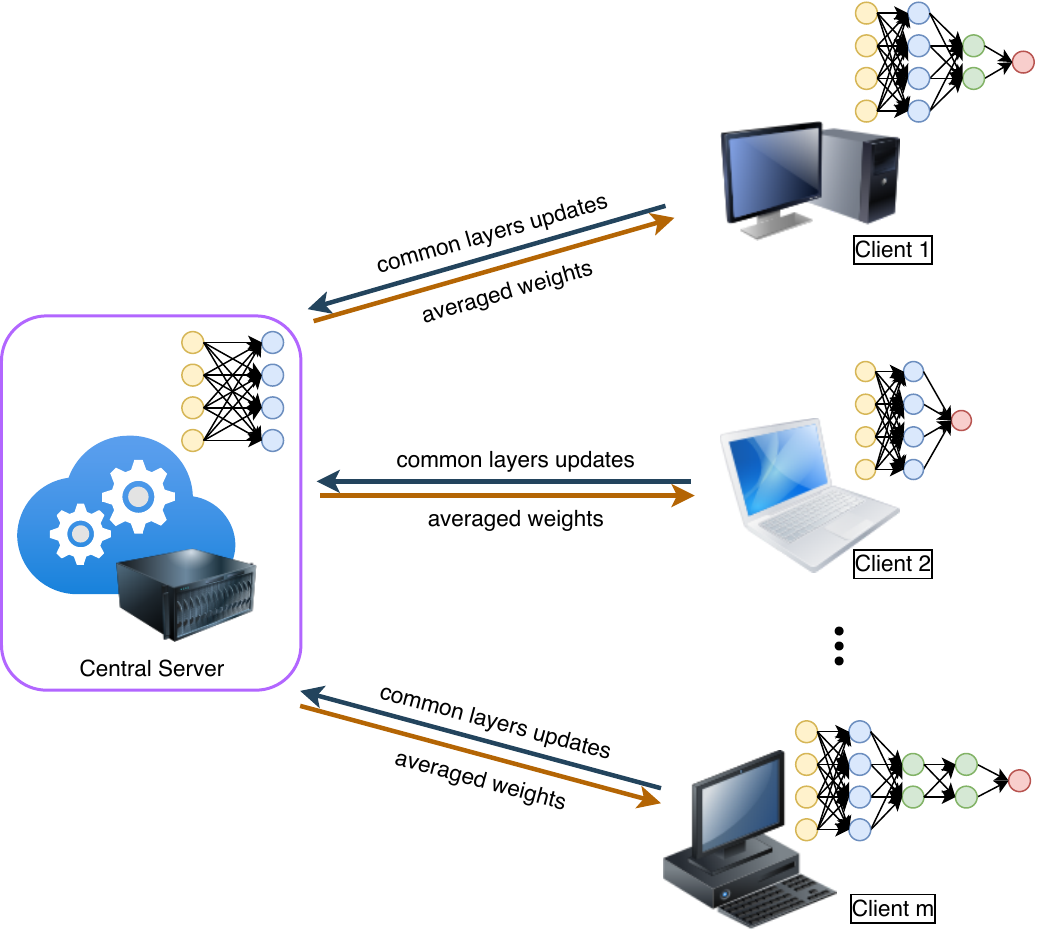}
  \caption{SplitML architecture. Each client holds a local model whose top layers (closer to the input) are common across clients and whose bottom layers (closer to the output, including the classification head) are personalized. At each round, clients send their common-layer updates to the central server, which returns the aggregated weights; the personalized bottom layers never leave the client. \fairis{} sets these aggregation weights to $\omega_k = (\eta-\Fk)/\sum_j(\eta-\mathcal{F}_j)$ in place of the size-proportional average, down-weighting biased clients while keeping every weight strictly positive. In FL mode, all layers are common, and the same weighting rule applies unchanged.}
  \label{fig:arch_app}
\end{figure}

\subsection{Group fairness metrics}
Algorithmic fairness addresses bias in automated decision-making that disproportionately harms or benefits demographic groups. We focus on two standard group-fairness notions applicable to binary classification tasks such as credit default prediction, where $Y\in\{0,1\}$ is the true label, $\hat{Y}$ is the model prediction, and $A\in\{0,1\}$ is a binary sensitive attribute ($A=1$ for the privileged group, $A=0$ for the unprivileged group). Both metrics are computed directly from the standard definitions in NumPy.

\textbf{Equal Opportunity Difference (EOD).} Equal opportunity requires that the True Positive Rate (TPR), the probability of a positive prediction given a truly positive instance, is equal across demographic groups:
\[\mathrm{EOD}=\bigl|\Pr(\hat{Y}{=}1\mid Y{=}1,A{=}1)-\Pr(\hat{Y}{=}1\mid Y{=}1,A{=}0)\bigr|.\]
EOD measures the absolute gap in TPR between the privileged and unprivileged groups. EOD=0 indicates perfect equal opportunity; EOD=1 indicates one group has zero TPR while the other has TPR=1. In credit scoring, equal opportunity means that creditworthy applicants from different demographic groups should receive credit approval at equal rates.

\textbf{Statistical Parity Difference (SPD).} Statistical parity requires that the positive prediction rate is equal across groups regardless of the true label:
\[\mathrm{SPD}=\bigl|\Pr(\hat{Y}{=}1\mid A{=}1)-\Pr(\hat{Y}{=}1\mid A{=}0)\bigr|.\]
SPD=0 indicates equal positive prediction rates (statistical parity); SPD=1 indicates one group receives positive predictions at a rate of 1 while the other receives 0. Both EOD and SPD map to $[0,1]$, with lower values indicating greater fairness. We write $\Fk$ for client $k$'s local EOD score.

\textbf{Demographic Participation (DP) and its importance.} As defined in Property~\ref{prop:dp}, DP requires that every client receive a strictly positive aggregation weight. The fairness motivation for DP is as follows: if a client is excluded (weight set to zero), its training data, including the demographic subgroup it serves, is removed from the shared model. Under non-IID data, demographic subgroups are often concentrated in specific clients. Excluding a biased client may therefore exclude the minority group that experiences discrimination, paradoxically harming the group most in need of fair treatment. \fairis{} guarantees DP by construction (Theorem~\ref{thm:dp}), whereas \fairfed{}'s gap-based update is not lower-bounded and can drive a client's weight toward zero under a sufficiently large fairness budget, so positivity is not guaranteed.

Two qualifications are warranted. First, DP is a structural rather than a numerical guarantee: a client held at $\omega_k=0.005$ contributes little to any single round, and we do not claim such a client is meaningfully represented in that round. What DP rules out is \emph{permanent elimination}, the case where a client's weight is set to exactly zero and its subgroup leaves the shared model entirely, with no path back because a zeroed client cannot influence subsequent rounds. Second, the subgroup argument presumes demographic groups are concentrated per client, which holds under strong non-IID partitioning but need not hold in general, and certainly need not hold for an adversarial client that may serve no distinct subgroup at all. DP is therefore best read as a safeguard for honest minority-serving clients rather than as a benefit conferred on the adversary.

\subsection{Gini surrogate in GLocalFair}
GLocalFair~\cite{meerza2024glocalfair} uses the Gini coefficient as a surrogate fairness measure to avoid sharing raw sensitive attribute statistics with the server. The Gini coefficient of a set of values $\{v_1,\ldots,v_n\}$ measures inequality: $G = \sum_{i,j}|v_i-v_j|/(2n\sum_i v_i)$, where $G=0$ means perfect equality and $G=1$ means maximal inequality. Applied to model prediction probabilities across groups, the Gini surrogate approximates the EOD without requiring the server to observe group labels. GLocalFair combines this surrogate with clustering-based aggregation so that clients with similar fairness profiles are grouped before averaging. The advantage is that the server learns less about individual clients' sensitive attribute distributions. The limitation is that the Gini approximation introduces noise into the fairness measurement, and the clustering step adds communication and computational overhead. \fairis{} takes a different approach: it transmits the exact EOD as a scalar, which reveals relative bias ordering across clients but not the underlying group statistics, and relies on secure multi-party computation for score privacy (identified as future work in the main paper).

\subsection{Byzantine-robust aggregation and why it does not address fairness poisoning}
Byzantine-robust aggregation rules defend federated learning against malicious clients that send arbitrary or adversarial model updates. Krum~\cite{blanchard2017krum} selects the update closest to its nearest neighbors in parameter space, discarding outliers; coordinate-wise median and trimmed-mean rules aggregate each parameter dimension robustly. These methods share a common assumption: that malicious updates are statistical outliers in parameter or gradient space, distinguishable from honest updates by distance or magnitude. Fairness poisoning attacks~\cite{meerza2024eabfl,kasyap2025} are explicitly designed to violate this assumption. The adversary maintains prediction accuracy within $\varepsilon$ of the benign baseline (Eq.~\ref{eq:goal}), so its update lies within the normal distribution of honest updates in both magnitude and predictive behavior. Only the fairness profile of the resulting model is anomalous, and fairness is not a quantity that distance-based Byzantine defenses measure. This is why \fairis{} weights by the explicit fairness score rather than by parameter-space distance: the defense must operate on the same axis along which the attack is mounted.

\subsection{Privacy-preserving techniques for collaborative learning}
Three families of privacy-preserving techniques are commonly layered onto collaborative learning, and each is compatible with \fairis{}. \emph{Secure aggregation}~\cite{bonawitz2017} uses cryptographic masking so that the server learns only the sum of client updates, not any individual update; \fairis{} weights can be computed under secure aggregation if the fairness scores are also aggregated securely. \emph{Differential privacy}~\cite{geyer2017} adds calibrated noise to client updates to bound the information any single record contributes; this is orthogonal to \fairis{}, which operates on aggregation weights rather than the updates themselves. \emph{Homomorphic encryption} and secure multi-party computation allow computation on encrypted values; the closest prior work combining fairness with these techniques is the privacy-preserving and fairness-aware framework of Bendoukha et al.~\cite{bendoukha2025popets}, which the main paper identifies as the natural next step for protecting the transmitted fairness scores. \fairis{} as presented transmits scalar fairness scores in plaintext, which reveals the relative bias ordering of clients but not the underlying group statistics; computing the weights under secure multi-party computation so the server learns only the normalized weights is left as future work.

\subsection{Complexity and convergence considerations}\label{app:complexity}
The computational overhead of \fairis{} over plain \fedavg{} is negligible. Each round, the server computes $K$ scalar subtractions ($\wbar_k = \eta - \Fk$), one sum, and $K$ divisions, for $O(K)$ additional work independent of model size; the dominant aggregation cost of the weighted parameter sum is identical to \fedavg{}. Each client additionally computes its local fairness score once per round, which requires a single forward pass over its local data already performed during evaluation. \fairis{} therefore adds no asymptotic overhead. Regarding convergence, because $\omega_k > 0$ for all clients and $\sum_k \omega_k = 1$, the \fairis{} update is a convex combination of client parameters, identical in form to \fedavg{} but with fairness-weighted coefficients. By analogy with \fedavg{} under bounded heterogeneity, convergence is expected; however, because the \fairis{} weights are functions of the model state at each round (creating a feedback loop absent in fixed-weight \fedavg{}), a formal convergence-rate proof for the round-varying case would require additional assumptions. We treat this as an open theoretical question (see the conclusion) rather than a settled claim. Characterizing the convergence rate under round-varying fairness weights and the minimum fraction of fair clients required for collaboration to improve rather than harm fairness is left for future work.

\subsection{Full SplitML and FL aggregation results}\label{app:fl_table}
Table~\ref{tab:main} gives the main SplitML EOD and SPD comparison (at $K=10$) summarized in Section~\ref{sec:experiments}; Table~\ref{tab:fl} gives the full FL full-model results (at $K=5$). All values are mean $\pm$ standard deviation over non-degenerate client-run combinations across three seeds. Table~\ref{tab:attack} reports the active fairness poisoning evaluation discussed in Section~\ref{sec:experiments}.

\begin{table}[ht]
\centering
\caption{Main comparison under Dirichlet non-IID partitioning ($\alpha=0.5$), 2 shared layers, server-side clipping enabled. Values are mean$\pm$std over all non-degenerate client-run combinations across 3 seeds. EOD and SPD are lower-is-fairer ($\downarrow$). \colorbox{fairisgreen}{\strut Shaded rows}: \fairis{}. \textbf{Bold}: best per block. Uniform ($\omega_k=1/K$) is the size-blind ablation: it discards size-proportional weighting exactly as \fairis{} does but ignores the fairness signal, so any \fairis{} result not better than Uniform is not evidence of fairness awareness.}
\label{tab:main}
\resizebox{\columnwidth}{!}{%
\begin{tabular}{llcc}
\toprule
Dataset & Method & EOD$\downarrow$ & SPD$\downarrow$ \\
\midrule
\multirow{9}{*}{German ($K{=}10$)}
  & Silo & 0.230\std{0.204} & 0.245\std{0.205} \\
  & \fedavg{} & 0.144\std{0.126} & 0.147\std{0.131} \\
  & Uniform & \textbf{0.112}\std{0.109} & \textbf{0.129}\std{0.116} \\
  & \fairfed{} $\beta{=}0$ & 0.144\std{0.126} & 0.147\std{0.131} \\
  & \fairfed{} $\beta{=}1$ & 0.144\std{0.126} & 0.147\std{0.131} \\
  & \cellcolor{fairisgreen}\fairis{} $\eta{=}1.01$ & \cellcolor{fairisgreen}\textbf{0.112}\std{0.109} & \cellcolor{fairisgreen}\textbf{0.129}\std{0.116} \\
  & \cellcolor{fairisgreen}\fairis{} $\eta{=}1.32$$^\dagger$ & \cellcolor{fairisgreen}\textbf{0.112}\std{0.109} & \cellcolor{fairisgreen}\textbf{0.129}\std{0.116} \\
  & \cellcolor{fairisgreen}\fairisn{} $\eta{=}1.01$ & \cellcolor{fairisgreen}0.144\std{0.126} & \cellcolor{fairisgreen}0.147\std{0.131} \\
  & \cellcolor{fairisgreen}\fairisn{} $\eta{=}1.32$$^\dagger$ & \cellcolor{fairisgreen}0.144\std{0.126} & \cellcolor{fairisgreen}0.147\std{0.131} \\
\midrule
\multirow{9}{*}{Taiwan ($K{=}10$)}
  & Silo & 0.401\std{0.297} & \textbf{0.373}\std{0.294} \\
  & \fedavg{} & 0.426\std{0.270} & 0.412\std{0.254} \\
  & Uniform & 0.445\std{0.273} & 0.449\std{0.267} \\
  & \fairfed{} $\beta{=}0$ & 0.426\std{0.270} & 0.412\std{0.254} \\
  & \fairfed{} $\beta{=}1$ & \textbf{0.382}\std{0.234} & 0.384\std{0.248} \\
  & \cellcolor{fairisgreen}\fairis{} $\eta{=}1.01$ & \cellcolor{fairisgreen}0.467\std{0.260} & \cellcolor{fairisgreen}0.455\std{0.266} \\
  & \cellcolor{fairisgreen}\fairis{} $\eta{=}1.32$$^\dagger$ & \cellcolor{fairisgreen}0.482\std{0.249} & \cellcolor{fairisgreen}0.473\std{0.258} \\
  & \cellcolor{fairisgreen}\fairisn{} $\eta{=}1.01$ & \cellcolor{fairisgreen}0.449\std{0.286} & \cellcolor{fairisgreen}0.435\std{0.274} \\
  & \cellcolor{fairisgreen}\fairisn{} $\eta{=}1.32$$^\dagger$ & \cellcolor{fairisgreen}0.452\std{0.280} & \cellcolor{fairisgreen}0.439\std{0.264} \\
\midrule
\multirow{9}{*}{Adult ($K{=}10$)}
  & Silo & 0.303\std{0.208} & 0.236\std{0.165} \\
  & \fedavg{} & 0.209\std{0.182} & 0.206\std{0.138} \\
  & Uniform & 0.258\std{0.170} & 0.196\std{0.128} \\
  & \fairfed{} $\beta{=}0$ & 0.209\std{0.182} & 0.206\std{0.138} \\
  & \fairfed{} $\beta{=}1$ & \textbf{0.173}\std{0.178} & \textbf{0.167}\std{0.097} \\
  & \cellcolor{fairisgreen}\fairis{} $\eta{=}1.01$ & \cellcolor{fairisgreen}0.251\std{0.193} & \cellcolor{fairisgreen}0.203\std{0.131} \\
  & \cellcolor{fairisgreen}\fairis{} $\eta{=}1.32$$^\dagger$ & \cellcolor{fairisgreen}0.254\std{0.184} & \cellcolor{fairisgreen}0.198\std{0.127} \\
  & \cellcolor{fairisgreen}\fairisn{} $\eta{=}1.01$ & \cellcolor{fairisgreen}0.202\std{0.192} & \cellcolor{fairisgreen}0.209\std{0.129} \\
  & \cellcolor{fairisgreen}\fairisn{} $\eta{=}1.32$$^\dagger$ & \cellcolor{fairisgreen}0.198\std{0.192} & \cellcolor{fairisgreen}0.205\std{0.135} \\
\bottomrule
\multicolumn{4}{l}{$^\dagger$ Exceeds recommended range $(1,(K{+}1)/K]=(1,1.1]$ for $K{=}10$; included for comparability.}
\end{tabular}%
}
\end{table}

\begin{table}[ht]
\centering
\caption{Architecture-agnostic validation: FL full-model aggregation. Values are mean$\pm$std over non-degenerate client-run combinations across 3 seeds. \colorbox{fairisgreen}{\strut Shaded}: \fairis{}. \textbf{Bold}: best per block. Uniform is the size-blind ablation as in Table~\ref{tab:main}.}
\label{tab:fl}
\resizebox{\columnwidth}{!}{%
\begin{tabular}{llcc}
\toprule
Dataset & Method & EOD$\downarrow$ & Acc \\
\midrule
\multirow{9}{*}{German ($K{=}5$)}
  & Silo & 0.272\std{0.182} & \textbf{0.621} \\
  & \fedavg{} & \textbf{0.098}\std{0.074} & 0.732 \\
  & Uniform & 0.252\std{0.176} & 0.688 \\
  & \fairfed{} $\beta{=}0$ & \textbf{0.098}\std{0.074} & 0.732 \\
  & \fairfed{} $\beta{=}1$ & \textbf{0.098}\std{0.074} & 0.732 \\
  & \cellcolor{fairisgreen}\fairis{} $\eta{=}1.01$ & \cellcolor{fairisgreen}0.252\std{0.176} & \cellcolor{fairisgreen}0.688 \\
  & \cellcolor{fairisgreen}\fairis{} $\eta{=}1.32$$^\dagger$ & \cellcolor{fairisgreen}0.252\std{0.176} & \cellcolor{fairisgreen}0.688 \\
  & \cellcolor{fairisgreen}\fairisn{} $\eta{=}1.01$ & \cellcolor{fairisgreen}\textbf{0.098}\std{0.074} & \cellcolor{fairisgreen}0.732 \\
  & \cellcolor{fairisgreen}\fairisn{} $\eta{=}1.32$$^\dagger$ & \cellcolor{fairisgreen}\textbf{0.098}\std{0.074} & \cellcolor{fairisgreen}0.732 \\
\midrule
\multirow{9}{*}{Taiwan ($K{=}5$)}
  & Silo & 0.373\std{0.266} & 0.638 \\
  & \fedavg{} & 0.161\std{0.103} & 0.733 \\
  & Uniform & 0.304\std{0.107} & 0.682 \\
  & \fairfed{} $\beta{=}0$ & 0.161\std{0.103} & 0.733 \\
  & \fairfed{} $\beta{=}1$ & 0.190\std{0.108} & \textbf{0.612} \\
  & \cellcolor{fairisgreen}\fairis{} $\eta{=}1.01$ & \cellcolor{fairisgreen}0.231\std{0.116} & \cellcolor{fairisgreen}0.702 \\
  & \cellcolor{fairisgreen}\fairis{} $\eta{=}1.32$$^\dagger$ & \cellcolor{fairisgreen}0.213\std{0.143} & \cellcolor{fairisgreen}0.709 \\
  & \cellcolor{fairisgreen}\fairisn{} $\eta{=}1.01$ & \cellcolor{fairisgreen}\textbf{0.087}\std{0.056} & \cellcolor{fairisgreen}0.719 \\
  & \cellcolor{fairisgreen}\fairisn{} $\eta{=}1.32$$^\dagger$ & \cellcolor{fairisgreen}0.105\std{0.102} & \cellcolor{fairisgreen}0.740 \\
\midrule
\multirow{9}{*}{Adult ($K{=}5$)}
  & Silo & 0.342\std{0.239} & \textbf{0.736} \\
  & \fedavg{} & 0.209\std{0.252} & 0.804 \\
  & Uniform & 0.194\std{0.084} & 0.779 \\
  & \fairfed{} $\beta{=}0$ & 0.209\std{0.252} & 0.804 \\
  & \fairfed{} $\beta{=}1$ & \textbf{0.150}\std{0.119} & 0.810 \\
  & \cellcolor{fairisgreen}\fairis{} $\eta{=}1.01$ & \cellcolor{fairisgreen}0.184\std{0.075} & \cellcolor{fairisgreen}0.783 \\
  & \cellcolor{fairisgreen}\fairis{} $\eta{=}1.32$$^\dagger$ & \cellcolor{fairisgreen}0.196\std{0.070} & \cellcolor{fairisgreen}0.781 \\
  & \cellcolor{fairisgreen}\fairisn{} $\eta{=}1.01$ & \cellcolor{fairisgreen}0.221\std{0.253} & \cellcolor{fairisgreen}0.803 \\
  & \cellcolor{fairisgreen}\fairisn{} $\eta{=}1.32$$^\dagger$ & \cellcolor{fairisgreen}0.231\std{0.246} & \cellcolor{fairisgreen}0.804 \\
\bottomrule
\multicolumn{4}{l}{$^\dagger$ Exceeds recommended range $(1,(K{+}1)/K]=(1,1.2]$ for $K{=}5$; included for comparability.}
\end{tabular}%
}
\end{table}

\begin{table}[ht]
\centering
\caption{Security evaluation under active fairness poisoning ($K=3$, $C_0$ adversarial, 2 shared layers, server-side clipping enabled, mean over 3 seeds). $\omega_0$: mean aggregation weight assigned to $C_0$. Atk.\ EOD: group disparity of $C_0$'s model. Fair EOD: mean EOD of honest clients (lower is better). Acc.\ gap: $|\text{acc}_\text{atk}-\text{acc}_\text{benign}|$ against an adversary-free control on the same partition and seed. MWR and DP are assessed per Section~\ref{sec:threat}: $\checkmark$=satisfied, $\sim$=satisfied in these runs but not guaranteed, $\times$=violated. \fairis{} is the only rule that guarantees both properties by construction; Uniform ignores the fairness signal entirely and so fixes $\omega_0=1/K$ regardless of injected bias. \fairfed{} $\beta=0$ equals \fedavg{}.}
\label{tab:attack}
\resizebox{\columnwidth}{!}{%
\begin{tabular}{llccccc c}
\toprule
Dataset & Method & $\omega_0$ & Atk.\ EOD & Fair EOD$\downarrow$ & Acc.\ gap & MWR & DP \\
\midrule
\multirow{8}{*}{German}
  & \fedavg{} & 0.395 & 0.754 & 0.378 & 0.120 & $\times$ & $\checkmark$ \\
  & Uniform & 0.333 & 0.738 & 0.420 & 0.130 & $\times$ & $\checkmark$ \\
  & \fairfed{} $\beta{=}0$ & 0.395 & 0.754 & 0.378 & 0.120 & $\times$ & $\checkmark$ \\
  & \fairfed{} $\beta{=}1$ & 0.380 & 0.696 & \textbf{0.356} & 0.145 & $\times$ & $\sim$ \\
  & \cellcolor{fairisgreen}\fairis{} $\eta{=}1.01$ & \cellcolor{fairisgreen}\textbf{0.005} & \cellcolor{fairisgreen}0.547 & \cellcolor{fairisgreen}0.386 & \cellcolor{fairisgreen}0.145 & \cellcolor{fairisgreen}$\checkmark$ & \cellcolor{fairisgreen}$\checkmark$ \\
  & \cellcolor{fairisgreen}\fairis{} $\eta{=}1.32$ & \cellcolor{fairisgreen}0.108 & \cellcolor{fairisgreen}0.696 & \cellcolor{fairisgreen}0.383 & \cellcolor{fairisgreen}0.090 & \cellcolor{fairisgreen}$\checkmark$ & \cellcolor{fairisgreen}$\checkmark$ \\
  & \cellcolor{fairisgreen}\fairisn{} $\eta{=}1.01$ & \cellcolor{fairisgreen}0.007 & \cellcolor{fairisgreen}0.557 & \cellcolor{fairisgreen}0.384 & \cellcolor{fairisgreen}0.165 & \cellcolor{fairisgreen}$\checkmark$ & \cellcolor{fairisgreen}$\checkmark$ \\
  & \cellcolor{fairisgreen}\fairisn{} $\eta{=}1.32$ & \cellcolor{fairisgreen}0.150 & \cellcolor{fairisgreen}0.624 & \cellcolor{fairisgreen}0.386 & \cellcolor{fairisgreen}0.098 & \cellcolor{fairisgreen}$\checkmark$ & \cellcolor{fairisgreen}$\checkmark$ \\
\midrule
\multirow{8}{*}{Taiwan}
  & \fedavg{} & 0.210 & 0.801 & 0.485 & 0.087 & $\times$ & $\checkmark$ \\
  & Uniform & 0.333 & 0.856 & 0.542 & 0.079 & $\times$ & $\checkmark$ \\
  & \fairfed{} $\beta{=}0$ & 0.210 & 0.801 & 0.485 & 0.087 & $\times$ & $\checkmark$ \\
  & \fairfed{} $\beta{=}1$ & \textbf{0.000} & 0.571 & 0.466 & 0.128 & $\times$ & $\times$ \\
  & \cellcolor{fairisgreen}\fairis{} $\eta{=}1.01$ & \cellcolor{fairisgreen}0.048 & \cellcolor{fairisgreen}0.692 & \cellcolor{fairisgreen}0.409 & \cellcolor{fairisgreen}0.111 & \cellcolor{fairisgreen}$\checkmark$ & \cellcolor{fairisgreen}$\checkmark$ \\
  & \cellcolor{fairisgreen}\fairis{} $\eta{=}1.32$ & \cellcolor{fairisgreen}0.154 & \cellcolor{fairisgreen}0.829 & \cellcolor{fairisgreen}0.500 & \cellcolor{fairisgreen}0.101 & \cellcolor{fairisgreen}$\checkmark$ & \cellcolor{fairisgreen}$\checkmark$ \\
  & \cellcolor{fairisgreen}\fairisn{} $\eta{=}1.01$ & \cellcolor{fairisgreen}0.007 & \cellcolor{fairisgreen}0.729 & \cellcolor{fairisgreen}\textbf{0.382} & \cellcolor{fairisgreen}0.112 & \cellcolor{fairisgreen}$\checkmark$ & \cellcolor{fairisgreen}$\checkmark$ \\
  & \cellcolor{fairisgreen}\fairisn{} $\eta{=}1.32$ & \cellcolor{fairisgreen}0.078 & \cellcolor{fairisgreen}0.776 & \cellcolor{fairisgreen}0.476 & \cellcolor{fairisgreen}0.109 & \cellcolor{fairisgreen}$\checkmark$ & \cellcolor{fairisgreen}$\checkmark$ \\
\midrule
\multirow{8}{*}{Adult}
  & \fedavg{} & 0.303 & 0.858 & 0.519 & 0.236 & $\times$ & $\checkmark$ \\
  & Uniform & 0.333 & 0.719 & \textbf{0.479} & 0.254 & $\times$ & $\checkmark$ \\
  & \fairfed{} $\beta{=}0$ & 0.303 & 0.858 & 0.519 & 0.236 & $\times$ & $\checkmark$ \\
  & \fairfed{} $\beta{=}1$ & 0.180 & 0.611 & 0.499 & 0.210 & $\times$ & $\sim$ \\
  & \cellcolor{fairisgreen}\fairis{} $\eta{=}1.01$ & \cellcolor{fairisgreen}0.183 & \cellcolor{fairisgreen}0.387 & \cellcolor{fairisgreen}0.510 & \cellcolor{fairisgreen}0.201 & \cellcolor{fairisgreen}$\checkmark$ & \cellcolor{fairisgreen}$\checkmark$ \\
  & \cellcolor{fairisgreen}\fairis{} $\eta{=}1.32$ & \cellcolor{fairisgreen}0.230 & \cellcolor{fairisgreen}0.461 & \cellcolor{fairisgreen}0.519 & \cellcolor{fairisgreen}0.238 & \cellcolor{fairisgreen}$\checkmark$ & \cellcolor{fairisgreen}$\checkmark$ \\
  & \cellcolor{fairisgreen}\fairisn{} $\eta{=}1.01$ & \cellcolor{fairisgreen}0.185 & \cellcolor{fairisgreen}0.516 & \cellcolor{fairisgreen}0.525 & \cellcolor{fairisgreen}0.183 & \cellcolor{fairisgreen}$\checkmark$ & \cellcolor{fairisgreen}$\checkmark$ \\
  & \cellcolor{fairisgreen}\fairisn{} $\eta{=}1.32$ & \cellcolor{fairisgreen}\textbf{0.146} & \cellcolor{fairisgreen}0.720 & \cellcolor{fairisgreen}0.486 & \cellcolor{fairisgreen}0.206 & \cellcolor{fairisgreen}$\checkmark$ & \cellcolor{fairisgreen}$\checkmark$ \\
\bottomrule
\end{tabular}%
}
\end{table}

\begin{figure}[ht]
  \centering
  \includegraphics[width=0.99\columnwidth]{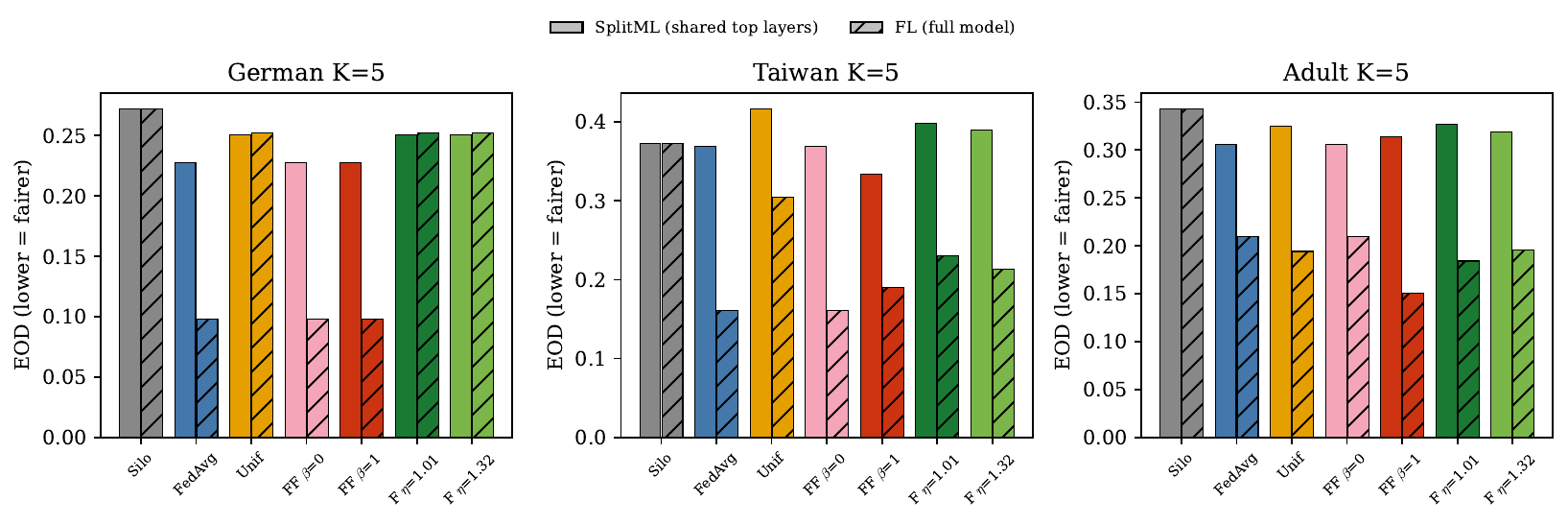}
  \caption{Architecture-agnostic transfer. Solid bars: SplitML mode (only the shared top, input-side layers are aggregated; the personalized bottom layers remain private). Hatched bars: FL mode (full model aggregated, all parameters shared). The \fairis{} weighting rule $\omega_k=(\eta-\Fk)/\sum_j(\eta-\mathcal{F}_j)$ is applied unchanged in both settings, shown for German, Taiwan, and Adult at $K=5$. The same closed-form aggregation runs on both architectures without modification while keeping every client's weight strictly positive. In the poisoning-free FL regime (Table~\ref{tab:fl}), \fedavg{} attains the lowest EOD on German and Taiwan while \fairfed{} ($\beta{=}1$) attains the lowest on Adult, where both fairness-aware rules improve on \fedavg{}; the figure's main point is the architecture independence of the mechanism.}
  \label{fig:fl_vs_splitml}
\end{figure}

\begin{figure}[ht]
  \centering
  \includegraphics[width=0.99\columnwidth]{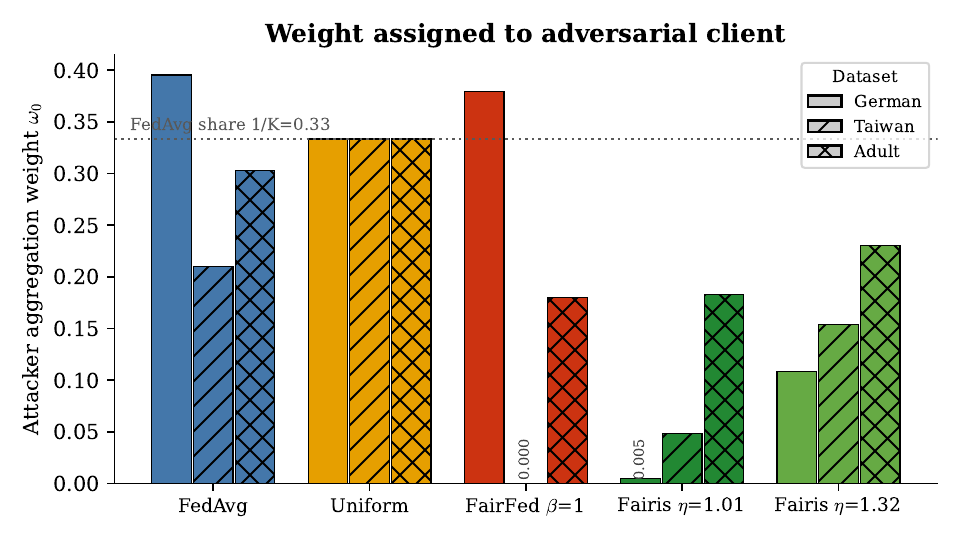}
  \caption{Empirical verification of Monotone Weight Reduction and Demographic Participation: aggregation weight $\omega_0$ assigned to the adversarial client under active fairness poisoning at $\alpha_\text{adv}=2$ ($K=3$, mean over three seeds; solid bars German Credit, hatched Taiwan Credit, cross-hatched Adult Income; dotted line marks the equal share $1/K=0.33$). \fedavg{} gives the attacker its data-proportional share ($0.395$, $0.210$, $0.303$) and Uniform fixes it at $1/K$ on every dataset, neither responding to injected bias. \fairfed{} ($\beta=1$) drives the weight to exactly $0$ on Taiwan ($0.380$, $0.000$, $0.180$), violating Demographic Participation there. Under \fairis{} ($\eta=1.01$) the weight stays strictly positive everywhere (Theorem~\ref{thm:dp}) at $0.005$, $0.048$, $0.183$, the lowest of any rule on German and Taiwan and within $0.003$ of clipped \fairfed{} on Adult; at $\eta=1.32$ it is $0.108$, $0.154$, $0.230$. Adult is the dataset where absolute-score weighting has least to act on, for the reason analyzed in Section~\ref{sec:sweep}.}
  \label{fig:mwr}
\end{figure}

\subsection{Worked example}\label{app:worked}
To make the weighting concrete, consider $K=3$ clients with local EOD scores $\mathcal{F}_0=0.85$ (a biased client), $\mathcal{F}_1=0.22$, and $\mathcal{F}_2=0.18$ (two fair clients), using $\eta=1.01$. The unnormalized weights are $\wbar_0 = 1.01-0.85 = 0.16$, $\wbar_1 = 1.01-0.22 = 0.79$, and $\wbar_2 = 1.01-0.18 = 0.83$, summing to $1.78$. The normalized weights are therefore $\omega_0 = 0.16/1.78 \approx 0.090$, $\omega_1 = 0.79/1.78 \approx 0.444$, and $\omega_2 = 0.83/1.78 \approx 0.466$. The biased client receives roughly one-fifth of the weight of either fair client, but its weight remains strictly positive, so its data still contributes to the shared model. Figure~\ref{fig:arch_app} shows the SplitML architecture to which these weights are applied. Under \fedavg{}, all three clients would receive weight $1/3 \approx 0.333$ regardless of bias; under \fairfed{} with a sufficiently large budget, the biased client's weight could be driven toward zero, removing its subgroup. \fairis{} thus interpolates between these extremes in a single closed-form step. The full weight-versus-bias relationship is plotted in Figure~\ref{fig:weight}.

\subsection{Five limitations of \fairfed{} that \fairis{} resolves}\label{app:fairfed_limits}
We identify five structural limitations of \fairfed{}~\cite{ezzeldin2023} that arise in any collaborative learning setting, including a new security limitation (L5) not previously identified in the literature. Understanding these limitations motivates the design choices in \fairis{}. (L1)~\fairfed{} optimizes the global model's fairness but does not guarantee per-client fairness improvement; a client with an already-biased dataset may see its local model become less fair after aggregation. (L2)~\fairfed{} is designed to work alongside optional local pre-processing debiasing on each client; even if clients apply such debiasing, the collaborative aggregation can undo those local fairness gains, since clients with more biased data contribute updates that dominate the aggregate and shift the global model back toward bias. (L3)~\fairfed{}'s aggregation weight update rule is not lower-bounded and can produce negative weights, which the standard implementation clips to zero; under a sufficiently large fairness budget a biased client can therefore lose all weight, removing its demographic subgroup from the shared representation and potentially harming the very group that fairness interventions are meant to protect (Property~DP not guaranteed). (L4)~\fairfed{} minimizes the \emph{gap} between a client's local fairness score and the global average, so a client with a small gap but absolutely high bias receives more weight than a fairer client with a larger gap; this can reinforce systemic bias. (L5, new)~\fairfed{}'s gap depends on the global fairness score, which an adversary can infer from the broadcast model; a $\PPT$ adversary can therefore set $\Fk\approx\Fg$, minimizing its weight penalty while maintaining positive bias (Property~NG violated, proved in Theorem~\ref{thm:ng}). \fairis{} addresses all five limitations by using the absolute score in a closed-form positive-weight formula.

\subsection{Convergence of global-model fairness}\label{app:convergence}
Figure~\ref{fig:convergence} tracks the global-model EOD over communication rounds on German Credit (full-model FL, $K=5$, three-round moving average over three seeds). On this small dataset the local fairness-score channel saturates: every client interpolates its small training split, so the reported scores are near zero, \fairfed{} collapses onto \fedavg{} at both $\beta$ values (the two curves coincide exactly), and \fairis{} reduces to near-uniform averaging. The curves stay separated throughout: \fedavg{} (equivalently \fairfed{}) fluctuates between roughly $0.10$ and $0.14$, while \fairis{} ($\eta=1.01$) declines from about $0.28$ toward $0.22$ and remains above it, consistent with the full-model $K=5$ result in Table~\ref{tab:fl} (\fedavg{} $0.098$, \fairis{} $0.252$). We include this trace only as a convergence illustration on a saturated boundary case, not as a fairness ranking; the method's fairness behavior is characterized on the larger datasets and, most importantly, under active poisoning (Section~\ref{sec:security}).

\begin{figure}[ht]
  \centering
  \includegraphics[width=0.99\columnwidth]{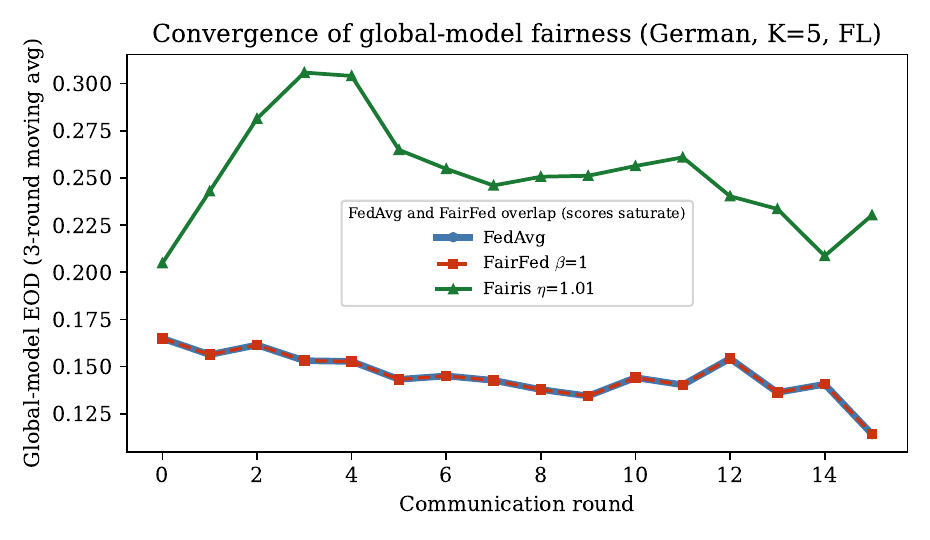}
  \caption{Global-model fairness over communication rounds (German Credit, full-model FL, $K=5$, three-round moving average of EOD over three seeds). On this small dataset the score channel saturates, so \fairfed{} ($\beta{=}1$) coincides exactly with \fedavg{} (overlapping curves) and \fairis{} reduces to near-uniform averaging; \fedavg{} holds the lower EOD throughout and \fairis{} ($\eta=1.01$) remains above it, consistent with Table~\ref{tab:fl}.}
  \label{fig:convergence}
\end{figure}

\subsection{Proofs of the security properties}\label{app:proofs}
This appendix gives the full proofs of the three security properties and the worst-case weight bound stated in Section~\ref{sec:security}.

\begin{proof}[of Theorem~\ref{thm:mwr}, MWR]
Hold every other client's score $\{\mathcal{F}_j\}_{j\ne k}$ fixed and define $S_{-k}=\sum_{j\ne k}(\eta-\mathcal{F}_j)$. Since $\eta>1$ and $\mathcal{F}_j\in[0,1]$, each term satisfies $\eta-\mathcal{F}_j\geq\eta-1>0$; with $K\geq2$ the index set $\{j\ne k\}$ is non-empty, so $S_{-k}>0$. (For $K=1$ the sum is empty, $S_{-k}=0$, and $\omega_k\equiv1$ is constant.) Writing $\omega_k=(\eta-\Fk)/(S_{-k}+\eta-\Fk)$ and differentiating with respect to $\Fk$:
\[\frac{\partial\omega_k}{\partial\Fk} = \frac{-S_{-k}}{(S_{-k}+\eta-\Fk)^2} < 0,\]
since $S_{-k}>0$ makes the numerator $-S_{-k}$ strictly negative while the squared denominator is strictly positive. Therefore $\omega_k$ is strictly decreasing in $\Fk$.
\end{proof}

\begin{proof}[of Corollary~\ref{cor:opt}]
By Theorem~\ref{thm:mwr}, $\omega_k$ is strictly decreasing in $\Fk$. The maximum is therefore achieved at $\Fk=0$, where $\omega_k=\eta/(\eta + S_{-k})$. Any $\Fk>0$ yields $\omega_k<\eta/(\eta+S_{-k})$.
\end{proof}

\begin{proof}[of Corollary~\ref{cor:coalition}, Coalition weight bound]
Let $W_\mathcal{B}=\sum_{k\in\mathcal{B}}(\eta-\mathcal{F}_k)$ and $W_{\neg\mathcal{B}}=\sum_{k\notin\mathcal{B}}(\eta-\mathcal{F}_k)$. Since $\eta>1$ and $\mathcal{F}_j\in[0,1]$, each outside term satisfies $\eta-\mathcal{F}_j\geq\eta-1>0$. The number of non-coalition clients is $K-|\mathcal{B}|$, which is at least $1$ because $|\mathcal{B}|<K/2$ by (A1); hence $W_{\neg\mathcal{B}}\geq(K-|\mathcal{B}|)(\eta-1)>0$. The argument uses only the positivity $W_{\neg\mathcal{B}}>0$, so the exact lower bound is immaterial; note the bound depends on the count of \emph{non-coalition} clients $K-|\mathcal{B}|$, not on $|\mathcal{B}|$. The total coalition weight is $\sum_{k\in\mathcal{B}}\omega_k = W_\mathcal{B}/(W_\mathcal{B}+W_{\neg\mathcal{B}})$. Differentiating with respect to any $\mathcal{F}_{k'}$ for $k'\in\mathcal{B}$:
\[\frac{\partial}{\partial\mathcal{F}_{k'}}\frac{W_\mathcal{B}}{W_\mathcal{B}+W_{\neg\mathcal{B}}} = \frac{-1\cdot(W_\mathcal{B}+W_{\neg\mathcal{B}})-W_\mathcal{B}\cdot(-1)}{(W_\mathcal{B}+W_{\neg\mathcal{B}})^2} = \frac{-W_{\neg\mathcal{B}}}{(W_\mathcal{B}+W_{\neg\mathcal{B}})^2} < 0.\]
The coalition's total weight is therefore maximized at $\mathcal{F}_k=0$ for all $k\in\mathcal{B}$, at which point the attack has zero bias impact per Eq.~(\ref{eq:goal}).
\end{proof}

\begin{proof}[of Theorem~\ref{thm:dp}, DP]
For any $\Fk\in[0,1]$, we have $\wbar_k=\eta-\Fk\geq\eta-1>0$ (since $\eta>1$). Because $\wbar_k>0$ and $\wbar_j>0$ for all $j$, the normalized weight $\omega_k=\wbar_k/\sum_j\wbar_j>0$. No client is ever assigned zero or negative weight, so no client's demographic subgroup is excluded from the aggregated model.
\end{proof}

\begin{proof}[of Theorem~\ref{thm:ng}, NG]
\emph{Under \fairis{}:} By Theorem~\ref{thm:mwr}, $\omega_k$ is strictly decreasing in $\Fk$. A perfectly fair client achieves weight $\eta/(\eta+S_{-k})$. Any $\Fk>0$ yields strictly lower weight. There is therefore no strategy that maintains $\Fk>0$ and achieves the weight corresponding to $\Fk=0$.

\emph{Under \fairfed{}:} We analyze the \emph{unclipped} \fairfed{} weight update $\wbar_k^t=\wbar_k^{t-1}-\beta(\Delta_k - \bar\Delta)$, where $\Delta_k=|\Fg-\Fk|$ and $\bar\Delta=(1/K)\sum_i\Delta_i$, under three explicit premises: a positive fairness budget $\beta>0$, a positive total unnormalized mass $\sum_j\wbar_j^{t-1}>0$, and at least one honest client with a nonzero gap, so $\bar\Delta>0$ (if all gaps vanish no client is reweighted and the update is the identity). On timing, we treat $\Fg$ as the scalar broadcast at the end of the previous round: the adversary observes it (it is inferable from the broadcast model) and matches it, so $\Fg$ is a fixed constant when the adversary chooses $\Fk$. Setting $\Fk=\Fg$ gives $\Delta_k=0$, and the update becomes $\wbar_k^t=\wbar_k^{t-1}-\beta(0-\bar\Delta)=\wbar_k^{t-1}+\beta\bar\Delta$, which strictly exceeds $\wbar_k^{t-1}$ because $\beta\bar\Delta>0$; the adversary's unnormalized weight therefore grows. For honest clients the update $\wbar_j^t=\wbar_j^{t-1}-\beta(\Delta_j-\bar\Delta)$ reduces their unnormalized weight whenever $\Delta_j>\bar\Delta$. Summing over clients, $\sum_j\wbar_j^t=\sum_j\wbar_j^{t-1}-\beta\sum_j(\Delta_j-\bar\Delta)=\sum_j\wbar_j^{t-1}$, since $\sum_j(\Delta_j-\bar\Delta)=0$ by definition of $\bar\Delta$. The total mass is thus conserved and remains positive, so the adversary's normalized weight $\omega_0^t=\wbar_0^t/\sum_j\wbar_j^t$ strictly increases as $\wbar_0^t$ grows. The adversary's weight grows over rounds rather than decreasing, despite maintaining $\Fk=\Fg>0$, which proves the gamesmanship vulnerability. The formal argument above analyzes the unclipped gap update. The clipped implementation that floors weights at zero alters the normalization denominator, so the algebraic conclusion does not transfer verbatim. It does, however, transfer in the direction that matters, because the floor can bind only on clients whose unnormalized weight is being driven down, never on the matched-gap adversary whose weight is increasing. Iterating the clipped update confirms this: with $K=3$ equal-sized clients, honest scores $0.10$ and $0.85$, $\beta=1$, and an adversary that matches the broadcast $\Fg$ each round, the adversary's normalized share rises $0.550\to0.789\to0.982\to1.000$ over four rounds, at which point both honest clients have been floored to zero and the adversary holds the entire aggregate. Clipping therefore converts the gamesmanship vulnerability from unbounded weight growth into total capture, and additionally violates DP for the \emph{honest} clients.
\end{proof}

\subsection{Novelty comparison table}\label{app:novelty}
Table~\ref{tab:novelty} summarizes how \fairis{} compares with the most closely related methods across the five dimensions discussed in Section~\ref{sec:related}.

\begin{table}[ht]
\centering
\caption{Novelty comparison across five security and fairness dimensions. GF: group fairness target; PC: per-client fairness scope; AI: architecture-independent (weighting rule validated on two or more distinct aggregation architectures); PW: guaranteed positive weight for every client; AR: formal resistance to fairness poisoning attacks. $\checkmark$/--/(atk): satisfied/not addressed/attack paper that identifies the gap.}
\label{tab:novelty}
\begin{tabular}{lccccc}
\toprule
Method & GF & PC & AI & PW & AR \\
\midrule
\fairfed{}~\cite{ezzeldin2023} & $\checkmark$ & -- & -- & -- & -- \\
GLocalFair~\cite{meerza2024glocalfair} & $\checkmark$ & $\checkmark$ & -- & -- & -- \\
SFFL~\cite{zhang2025sffl} & $\checkmark$ & $\checkmark$ & -- & -- & -- \\
SplitLPF~\cite{chen2024splitlpf} & -- & $\checkmark$ & -- & -- & -- \\
EAB-FL~\cite{meerza2024eabfl} & $\checkmark$ & -- & -- & -- & (atk) \\
Kasyap et al.~\cite{kasyap2025} & $\checkmark$ & -- & -- & -- & (atk) \\
\midrule
\rowcolor{fairisgreen}\fairis{} (ours) & $\checkmark$ & $\checkmark$ & $\checkmark$ & $\checkmark$ & $\checkmark$ \\
\bottomrule
\end{tabular}
\end{table}

\subsection{SplitML EOD by client count}\label{app:eod_fig}
Figure~\ref{fig:eod} shows the per-method SplitML EOD across client counts $K$ for German Credit, Taiwan Credit, and Adult Income, complementing the representative-$K$ values in Table~\ref{tab:main}.

\begin{figure}[ht]
  \centering
  \includegraphics[width=0.99\columnwidth]{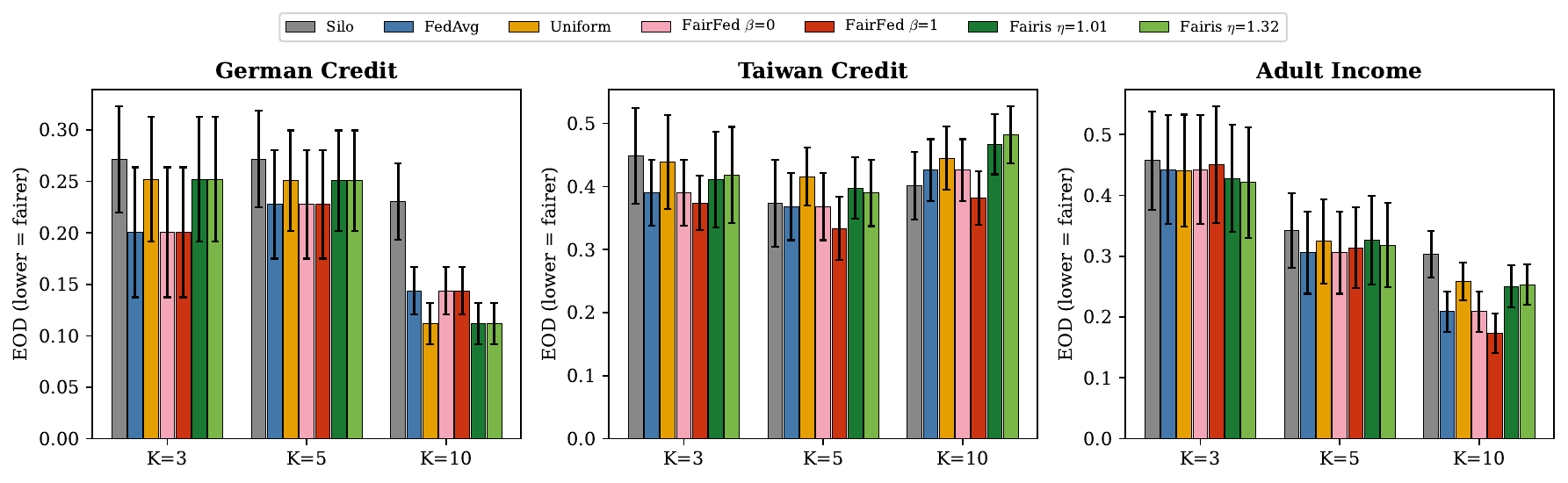}
  \caption{Mean Equal Opportunity Difference (EOD, lower is fairer) per method and number of clients $K$ under Dirichlet non-IID partitioning ($\alpha=0.5$), SplitML mode with 2 shared layers, averaged over non-degenerate client-run combinations from 3 seeds. Error bars are the standard error of the mean. No single method dominates across all datasets and client counts in this routine non-IID setting; \fairis{} keeps every client's weight strictly positive throughout, and the method's distinctive behavior appears under active poisoning (Table~\ref{tab:attack}) rather than in the poisoning-free comparison shown here.}
  \label{fig:eod}
\end{figure}

\end{document}